\newtheorem{lemma}{Lemma}
\newtheorem{theorem}[lemma]{Theorem}%  meant for continuous numbers
\newtheorem{corollary}[lemma]{Corollary}
\begin{document}

\title[Disassembling Assembly Theory]{\vspace{-3.3cm}Assembly Theory is an approximation to algorithmic complexity based on LZ compression that does not explain selection or evolution}
\author[1,2,3]{Felipe S. Abrah\~{a}o}
\author[1]{Santiago Hern\'andez-Orozco}
\author[4,5]{Narsis A. Kiani}
\author[6]{Jesper Tegn\'er}
\author[ 1,5,7,8]{Hector Zenil\footnote{Corresponding author. Email: hector.zenil@cs.ox.ac.uk}}

%\footnotetext{$\dagger$ Email: a$\_$utham@live.concordia.ca}
%\footnotetext{$\ddagger$ Email: fsa@lncc.br}

\affil[1]{\small Oxford Immune Algorithmics, Reading, U.K\small}
\affil[2]{\small Center for Logic, Epistemology and the History of Science, University of Campinas, Brazil\small}
\affil[3]{\small DEXL, National Laboratory for Scientific Computing, Brazil\small}
\affil[4]{\small Department of Oncology-Pathology, Center for Molecular Medicine, Karolinska Institutet, Sweden\small}
\affil[5]{\small Algorithmic Dynamics Lab, Center for Molecular Medicine, Karolinska Institutet, Sweden\small}
\affil[6]{\small Living Systems Lab, KAUST, Thuwal, Kingdom of Saudi Arabia\small}
\affil[7]{\small The Alan Turing Institute, British Library, London, U.K\small}
\affil[8]{\small School of Biomedical Engineering and Imaging Sciences, King's College London, U.K\small 
\vspace{-1cm}}

\abstract{
We prove the full equivalence between Assembly Theory (AT) and Shannon Entropy via a method based upon the principles of statistical compression renamed `assembly index' that belongs to the LZ family of popular compression algorithms (ZIP, GZIP, JPEG). Such popular algorithms have been shown to empirically reproduce the results of AT, results that have also been reported before in successful applications to separating organic from non-organic molecules and in the context of the study of selection and evolution.
We show that the assembly index value is equivalent to the size of a minimal context-free grammar. The statistical compressibility of such a method is bounded by Shannon Entropy and other equivalent traditional LZ compression schemes, such as LZ77, LZ78, or LZW. 
In addition, we demonstrate that AT, and the algorithms supporting its pathway complexity, assembly index, and assembly number, define compression schemes and methods that are subsumed into the theory of algorithmic (Kolmogorov-Solomonoff-Chaitin) complexity.
Due to AT's current lack of logical consistency in defining causality for non-stochastic processes and the lack of empirical evidence that it outperforms other complexity measures found in the literature capable of explaining the same phenomena, we conclude that the assembly index and the assembly number do not lead to an explanation or quantification of biases in generative (physical or biological) processes, including those brought about by (abiotic or Darwinian) selection and evolution, that could not have been arrived at using Shannon Entropy or that have not been reported before using classical information theory or algorithmic complexity.
   
%  We prove the connections and full equivalence of Assembly Theory to Shannon Entropy and statistical compression, and AT's disconnection from causality as a statistical concept. 
%  We demonstrate that formulating a traditional statistically compressed description of molecules, or the theory underlying it, does not imply an explanation or quantification of biases in generative (physical or biological) processes, including those brought about by (abiotic or Darwinian) selection and evolution, when lacking in logical consistency and empirical evidence. 
  % We argue that in their reasoning, the authors of AT fell into a circular argument when it comes to other selection results, and also conflate how objects may assemble with causal directionality, and conclude that Assembly Theory does not explain selection or evolution beyond known and previously established connections, some of which are reviewed. %%\todo{move to the introduction}
}

\keywords{assembly theory, assembly index/number, pathway complexity, algorithmic complexity, lossless compression, LZ compression, LZ77, LZ78, logical depth, selection, evolution, Block Decomposition Method, algorithmic probability.}

%%\pacs[JEL Classification]{D8, H51}

%%\pacs[MSC Classification]{35A01, 65L10, 65L12, 65L20, 65L70}

\maketitle

\section{Introduction}\label{sec1}

Assembly Theory (AT) has recently garnered significant attention in some academic circles and the scientific media. Responding to how open-ended forms can emerge from matter without a blueprint, AT purports to explain and quantify selection and evolution. 
The central claim of AT advanced in~\cite{cronin,croninnature} is that objects with a high assembly index ``are very unlikely to form abiotically''. 
This has been contested in~\cite{hazen}, whose results ``demonstrate that abiotic chemical processes have the potential to form crystal structures of great complexity'', exceeding the assembly index threshold (or MA, when applied to quantify the assembly index on molecules) proposed by AT's authors.
The existence of such abiotic objects would render AT's methods prone to false positives, corroborating the predictions in~\cite{salient}.
The assembly index (or MA) was shown to perform equally well, or worse in some cases, relative to popular compression algorithms (including those of a statistical nature)~\cite{salient}, some of which have been applied before in \cite{zenilchem,genait,zenildna}.

The following sections show the connections of AT to well-established theories that AT's authors claim are different from and unrelated to~\cite{croninentropy}.
We show that the claim advanced by its authors that AT unifies life and biology with physics~\cite{croninnature,prat} relies on a circular argument and on the use of a popular compression algorithm, with no empirical or logical support to establish deeper connections to selection and evolution than those already known (hierarchical modularity), already made (in connection to complexity)~\cite{genait,iscience,Abrahao2021bEmergenceAIDPTRSA}, or previously investigated using information- and graph-theoretic approaches to chemical and molecular complexity~\cite{Ivanciuc2013ChemicalGraphsMolecular,Mowshowitz2012,Bottcher2018MoleculesLife,zenilchem}.

The non-linear (or ``tree''-like) structure of the minimum rooted assembly (sub)spaces is one of the ways to define a (compression) scheme that encodes the assembling process itself back into a linear sequence of codewords. 
We demonstrate that the generative/assembling process in an assembly space is strictly equivalent to a \emph{Context-Free Grammar} (CFG), while the \emph{assembly index} value is equivalent to the size of such a compressing grammar.
This also implies that the assembly index can only be an approximation to the number of factors~\cite{10.1007/978-3-540-27836-8_5} in a traditional LZ scheme, such as LZ78 or LZW, which are statistical compression methods whose compression rates converge to that of Shannon Entropy.

In addition to the assembly index calculation method being a CFG-based compression method bounded by traditional LZ schemes, we demonstrate that the assembling process that results in the construction of an object is a \emph{LZ scheme}.
The compressibility achieved by this LZ scheme depends on the length of the shortest assembling paths and the ``simplicity'' of the minimal assembly spaces.
This notion of simplicity encompasses both how the assembly space structure differs from a single-thread (or linear) space and how many more distinct assembling paths can lead to the same object.
Such a LZ scheme reduces the complexity (which the assembly index aims to quantify) of the assembling process of an object to the compressibility and computational resource efficiency of a compression method.

The theory and methods of AT describing a compression algorithm are, therefore, subsumed into Algorithmic Information Theory (AIT), which in turn has been applied in the same areas that AT has covered, from organic versus non-organic compound classification to bio- and technosignature detection and selection and evolution~\cite{zenilchem,zenilfirst,genait,zenilld}. This makes the assembly index and assembly number proposed by AT effectively approximations to algorithmic (Kolmogorov-Solomonoff-Chaitin) complexity~\cite{Chaitin2004,Calude2002main,liandvitanyi,Downey2010}.

As proposed in \cite{croninnature}, the \emph{assembly number} is intended to measure the amount of selection and evolution necessary to produce the ensemble of (assembled) objects. 
That is, the assembly number aims to quantify the presence of constraints or biases in the underlying generative processes (e.g., those parts of the environment in which the objects were assembled) of the ensemble, processes that set the conditions for the appearance of the assembled objects.
A higher assembly number---not to be conflated with the assembly index---would mean that more ``selective forces'' were in play as, e.g., environmental constraints or biases, in order to allow or generate a higher concentration of high-assembly-index elements. Otherwise, these high-assembly-index objects would not occur as often in the ensemble.

Consonant with the constraints and biases that the assembly number aims to quantify, though in fact, as we demonstrate, it constitutes a \emph{compression method} subsumed into AIT, ensembles with higher assembly numbers are more compressible and would therefore diverge more markedly from those ensembles that are more statistically random, having fewer constraints and biases.
%Our results demonstrate that ensembles with higher assembly numbers are more compressible.
If one assumes that the assembly number quantifies the presence of constraints or biases in the underlying generative processes of the ensembles, then a more compressible ensemble implies that more constraints or biases played a role in generating more high-assembly-index objects more frequently than would have been the case in an environment with fewer constraints and biases (i.e., a more random or incompressible environment), thus increasing the frequency of occurrence of high-assembly-index (i.e., less compressible) objects in this environment.

Conversely, under the same assumption, the presence of more biotic processes in an ensemble implies a higher assembly number, which in turn would imply that the ensemble is more compressible.
This occurs, for example, in scenarios where there is a stronger presence of top-down (or downward) causation \cite{Abrahao2021bEmergenceAIDPTRSA} behind the possibilities or paths that lead to the construction of the objects, while a less compressible ensemble would indicate a weaker presence (or absence) of top-down causation.

We therefore conclude that AT cannot offer a different or better explanation of selection, evolution, or top-down causation than the connections already established~\cite{genait,Abrahao2021bEmergenceAIDPTRSA}, consistent with our previous position that a single scalar is unlikely to classify life or quantify selection or evolution independent of the environment and the perturbations it imposes on the objects and on the assembly process.

\section{Disassembling Assembly Theory: Main Concerns}

\subsection{The assembly index is a compression algorithm of the LZ family}
%\begin{enumerate}[1.]
%\item \textbf{Misalignment with Original Claims:}
%\item \textbf{The assembly Index is a compression algorithm belonging to the lZ77 family}
Despite the authors' assertion in the Assembly Theory (AT) paper that this theory and the assembly index are unrelated to algorithmic complexity, it is evident that AT is fundamentally encompassed within the realm of algorithmic complexity \cite{salient}. 
The assembly index, as proposed, seeks to gauge the complexity of an object based on the number of steps in its shortest copy-counting assembly pathway \cite{croninnature} via a procedure equivalent to LZ compression, which in turn is a computable approximation to algorithmic complexity, denoted by $\mathbf{K}$. 
See Sup. Inf.~\ref{supmat}.

At its core, the assembly index shares three key elements of entropy defined by Shannon himself~\cite{shannon}, and shares the three main elements of the LZ family of compression algorithms~\cite{lz,ziv1978compression} that make LZ converge to Shannon Entropy at the limit: the identification of repeated blocks, the usage of a dictionary containing repetitions and substitutions, and the ability to losslessly reconstruct the original object using its minimal LZ description. 
As demonstrated in the Sup. Inf.~\ref{supmat}, the parity in number of steps for compression and decompression using the assembly index effectively reduces the definition of AT to the length of the \emph{shortest assembly pathway}, which in turn is a loose upper bound of a resource-bounded approximation to algorithmic %(Kolmogorov-Chaitin) %%Felipe: We can delete this and the subsequent repetitions as they are already defined in Section 1 (Intro) and abstract.
complexity $\mathbf{K}$.

In contrast, more robust approximations to $ \mathbf{K} $, capable of capturing blocks and other causal content within an object, have been proposed for purposes ranging from exploring cause-and-effect chains to quantifying object memory and characterising process content~\cite{zenilbook,bdm}. These more advanced measures have found application in the same contexts and domains explored by AT, encompassing tasks like distinguishing organic from non-organic molecules~\cite{zenilchem}, investigating potential connections to selection and evolution~\cite{genait,iscience}, the detection of bio- and technosignatures \cite{zenil2012ImageCharacClassif,zenilet}, and explorations into causality \cite{nmi,aidbook}. Importantly, when applied to the data employed as evidence by AT, the more sophisticated measures consistently outperform the assembly index (see Sup. Inf.~\ref{supmat} and \cite{salient}). Thus, it becomes evident that AT and its assembly index represent a considerably constrained version of compression algorithms, and a loose upper bound of $ \mathbf{K} $. 

This limitation is attributable to the authors' exclusive consideration of computer programs adhering to the form of `Template Program A', as follows:\\

%\begin{center}
\noindent\begin{minipage}{0.30\linewidth}
\textbf{Template Program A:}\\
\end{minipage} \begin{minipage}{.70\linewidth}`while end-of-object, do $N$ times print(repetitions) $+$ print (all remaining objects not found in repetitions)'
\end{minipage}\\
%\end{center}

Any approximation to $ \mathbf{K} $ that accounts for identical repetitions, including all known lossless statistical compression algorithms, can achieve equivalent or superior results, as demonstrated in the Supplementary Information and \cite{salient}. This alignment with `Template Program A' effectively highlights the association of AT with well-established principles of compression and coding theory, thereby refuting the initial claim of its authors to present a unique methodology. Moreover, the authors' suggestion that their index may be generalised as a universally applicable algorithm for any object (including text)~\cite{fridman} further underscores the 
disconnect between AT-- and its authors' drive to reinvent traditional algorithms such as text compression based on Shannon Entropy--and the current state of the art in the field of statistical and non-statistical compression beyond LZW~\cite{aidbook}.

It is worth noting that although there may be minor variations in the implementation details of the assembly index and the assembly number, of which the authors themselves have proposed significantly different versions, we demonstrate that they are all qualitatively and quantitatively equivalent to the LZ compression algorithms introduced in the 1970s, such as LZ77/LZ78. See Sup. Inf.~\ref{supmat}.

In essence, the assembly index is fundamentally underpinned by the LZ encoding of the objects it measures. This reveals that the assembly index, and consequently AT, aligns more closely with the principles of traditional information theory (Shannon Entropy), and statistical data compression than the authors are willing to acknowledge~\cite{croninentropy}.
In fact, both AT and the statistical compression methods that underpin it are subsumed into the theory of algorithmic complexity.

The AT authors also assert that their index's ability to differentiate between organic and non-organic compounds validates its natural applicability. However, when compared with other statistical indexes, including various compression algorithms, these alternative methods often result in similar or superior performances~\cite{salient}. This undermines the claims made in favour of AT.

\subsection{A compression algorithm is a mechanical procedure that corresponds to a physical process}

The authors of AT argue that traditional data compression algorithms are an overly abstract process unsuited for modelling the construction (or assembly) of objects~\cite{fridman}. This view overlooks the practical and mechanistic nature of compression algorithms, particularly those in the LZ family. Since their introduction in 1977, LZ algorithms have been effectively used in detection, identification, clustering, and classification across various fields, including biology, chemistry, and medicine ~\cite{licompression,zenilchem,dauwels2011slowing}, and to approximate algorithmic complexity $\mathbf{K}$~\cite{liandvitanyi,zenilreview}.

The argument presented by the proponents of AT regarding the uncomputability of algorithmic complexity is deeply misguided. While it is true that $\mathbf{K}$ is semi-computable, computable algorithms like LZ77/LZ78/LZW have been widely used to approximate it. These algorithms have been applied in biology and chemistry, challenging the assertion that AT represents a unique or superior approach ~\cite{zenilchem,salient}.  Furthermore, in a 1976 article \cite{lempel1976complexity}, Lempel and Ziv defined an early version of their algorithm directly as a computable method to approximate algorithmic complexity, defining what is known as LZ complexity, which speaks to the initial motivation behind their now seminal compression algorithm.

The emphasis on the purported requirement of a Turing machine in the context of algorithmic complexity is a misdirected concern. 'Turing machine' is synonymous with `algorithm' and an 'algorithm' is a synonym for a 'Turing machine'. Any rule or method that is algorithmic in nature, including all the computable methods in AT, are algorithms, and are therefore technically Turing machines or programs that can run on a (universal) Turing machine. These facts are only a technicality unrelated to any putative advantage of AT over other approaches to algorithmic complexity.

On the contrary, one of the distinctive features of algorithmic information theory (AIT), algorithmic complexity being one of its indexes, is that its importance and pervasiveness in mathematics, theoretical computer science, and complexity science are in fact owed to the invariance of its results with respect to the model of computation, whether abstract or physically implemented.
More specifically, such statements from the authors of AT~\cite{fridman} indicate a fundamental misunderstanding of equivalence classes closed under reductions (in this case, computability classes), which are basic and pivotal concepts in computer science, complexity theory, and mathematics in general. It is evidence in support of this point that none of the formal proofs in the Sup. Inf. section~\ref{supmat} make any mention of any Turing machine.
In the same manner, all Turing machines referenced in \cite{salient} can be replaced by any sufficiently expressive programming language running on an arbitrary computer.

In practice, computable approximations like LZ77/LZ78---which AT mimics without attribution---do not require a Turing machine, and have been widely and successfully used in clustering and categorisation ~\cite{licompression,liandvitanyi}, including in the successful separation of chemical compounds into organic and non-organic categories~\cite{zenilchem}, and in the reconstruction of evolutionary phylogenetic trees~\cite{licompression}.
To say that a physical (computable/recursive) process like those affected by AT's methods is not a compression algorithm because it does not resemble or correspond to the functioning of a Turing machine is as naively wrong and misplaced as saying a program written in Python cannot model the movement of a pendulum because nature does not run Python. A Turing machine is an abstraction and a synonym of an `algorithm.' Everything is an algorithm in Assembly Theory, and therefore it is governed by the same principles of computer science and information theory.

This type of argument also reflects a misunderstanding of the purposes of Turing machines and of the foundations of computer science. As explained in a quote often attributed to Edsger W. Dijkstra,
\begin{quote}
``Computer science is no more about computers than astronomy is about telescopes."
\end{quote}
At its inception, Turing machines were defined within computer science
as an abstraction of the concept of an algorithm, years before what we now know as computers were built. 

In a published paper~\cite{croninentropy},
the authors offered a proof of computability of their assembly index to distance themselves from Shannon Entropy and algorithmic (Kolmogorov) complexity.  Their algorithm, categorised as \textbf{Template Program A}, is trivially computable and requires no proof of computability, but all other resource-bounded approximations to $\mathbf{K}$ are also computable, including LZW that has been used for 60 years for 
similar purposes~\cite{liandvitanyi}.
\subsection{Conflating object assembly process and directionality of causation}

The authors assume and present the sequential nature of their algorithm as an advantage~\cite{fridman}. 
Assembly Theory (AT) claims to be able to extract causal knowledge by measuring the degree of causality in the form of non-unique chains of cause and effect across the assembly pathways~\cite{fridman}. 
To tackle the problem of top-down (or downward) causation~\cite{Abrahao2021bEmergenceAIDPTRSA}, its assembly number should
be able to ``detect the emergence of new levels of organisation and their causal influence on lower-level phenomena in the world you are observing''~\cite{Jaeger2024ATspringer}.

The authors' central assumption is that each step of the basic \textbf{Template Program A} constitutes a cause-and-effect chain corresponding to how an object may have been physically assembled from an assembly path, which may not be unique. 
This is no different from, and is indeed a restricted version of, a (deterministic or non-deterministic) pushdown automaton capable of instantiating a grammar compressor (which the authors rename an `assembly pathway') and works exactly like a LZ algorithm. 
See Sup. Inf.~\ref{supmat}.

The assumption of sequential assembly from instantaneous object snapshots is not based on any physical (biological, chemical, or other) evidence. In contrast, there is overwhelming evidence that this is not the case. For example, in the case of a genome sequence, all the regions of the genome sequence are exposed to selective forces simultaneously. Transcription factors, or genes that regulate other genes, do not get assembled or interact only sequentially. In chemistry, reactions do not happen on one side of a molecule first and then propagate to the other but happen in parallel. Reactions in time follow and are represented sequentially. However, throughout AT's arguments, causal directionality in time is conflated with how an object may have been assembled from its instantaneous configuration.

This assumption of sequential assembly of an object based on causal direction is manifestly incorrect in regards to how an object subject to selection assembles.
In contrast, only by finding the generative mechanism of the object---such as the underlying set of (computable) mechanisms (of which the assembly index only takes into account one instance, the program that only counts copies)---and thus explaining the object in a non-trivial fashion, can one reproduce both the causal direction from the sequence of connected steps (see Fig.~\ref{figureTreediagrams}B) and how the object itself may have been assembled (which is not and cannot be in a sequential fashion). 

AT and its index cannot characterise this, for example, because the class of problems or functions recognised by pushdown automata (or generated by context-free grammars) is a proper subset of the class of recursive problems.

The authors of AT have mistakenly claimed that algorithmic complexity and Turing machines are not, or cannot be, related to causality. This is incorrect \cite{nmi,zenil2023algorithmic}. They were introduced as causal artifacts for studying mechanistic means of performing logical operations.

We have demonstrated that the assembly index requires a finite automaton to instantiate a context-free grammar (see Sup. Inf.~\ref{supmat}), a version of a specific-purpose machine, just as any other resource-bounded approximation to $\mathbf{K}$ would need to be instantiated (e.g. calculated, even by hand). In this case, the size of their formula or the size of the implementation of their computer program is the size of the special-purpose finite automaton, which is common and of fixed size for their calculations (which allows them to be discounted from the final length, just as it is from $\mathbf{K}$ or resource-bounded approximations).

That the calculation of the assembly index from the minimum rooted assembly space can be reduced to a grammar compression scheme demonstrates not only that the assembly process' complexity is dependent on such contingencies as which assembly spaces or which distinct paths were taken in past historical stages of building the object,  but also that this recursivity and dependence on past trajectories are subsumed into the concept of compressibility.
However, unlike algorithmic approaches, assembly theory does not incorporate any of the elements of causality such as perturbation and counterfactual analysis, as has been done in the context of algorithmic complexity~\cite{aidbook}, a generalisation that when compared to AT reflects the concealed simplistic nature of AT.

%%%

%\item 
\subsection{Lack of control experiments and absence of supporting empirical data beyond current domain knowledge}

%The proponents of AT failed to conduct basic control experiments. When introducing a new metric, a minimal requirement is to benchmark against other established indexes, including various similar coding and compression algorithms. Other indexes that approximate algorithmic complexity that employ similar statistical properties produce equivalent or superior results to AT when applied to exactly the same datasets~\cite{salient}. These comparisons are essential for establishing the validity and utility of a new scientific method or index. The research shown in~\cite{croninnature} also lacks depth in experimental validation. Unlike previous studies that have exhaustively explored databases with over 15,000 chemical compounds to differentiate between organic and non-organic substances using algorithms of the LZ family and others~\cite{zenilchem}, the experiments in the papers from the group of AT fall short in scope and rigour or are nonexistent. Their previous paper~\cite{cronin} did not experimentally compare the assembly index to any other measure, of which there are many. Similar to LZ77/LZ78 are RLE, Huffman coding, and other Shannon-entropy and dictionary-based algorithms, most of which we tested, reproduce the same or better results than the results published by the AT group.\\
A critical examination of the AT methodology reveals significant shortcomings. Firstly, proponents of AT failed to conduct basic control experiments, a foundational aspect of introducing a new scientific metric. Benchmarking against established indices, particularly in coding and compression algorithms, is crucial to validating any new metric in the domain. Previous work on AT
has never included meaningful experimental comparisons of the assembly index with other existing measures on false grounds that their measure is completely different~\cite{croninentropy} (Figs. 1 and 2). Yet, we have shown that other algorithms, such as RLE, Huffman coding (the first dictionary-based universal compression algorithm), and other compression algorithms based on dictionary-based methods and Shannon Entropy produced equivalent or superior results compared to the results published by the authors of AT~\cite{salient}.

AT also introduces a cutoff value in its index, purported to offer a unique perspective on molecular complexity, distinct from those based on algorithmic complexity. This value indicates when an object is more likely to be organic, alive, or a product of a living system.  However, various indices and compression algorithms tested have yielded equivalent cutoff values (see~\cite{salient}). 
In fact, such an Assembly Index threshold replicates previous results on molecular separation that employ algorithmic complexity.
Thus, this overlap in results challenges the notion that AT provides a unique tool for distinguishing between organic and non-organic matter. For instance, similar cutoff values have been derived in other studies, such as those mentioned in~\cite{zenilchem}, effectively separating organic from non-organic molecular compounds. 
%The assembly index in AT mainly accounts for identical repetitions, whereas algorithmic complexity considers these repetitions along with a broader range of statistical and computable transformations. Hence, AT is a specific instance or a limited case within the algorithmic (Kolmogorov) complexity framework.
While the authors of AT ignored decades of research in chemical complexity based on graph theory and the principles of Entropy~\cite{Mowshowitz2012,Ivanciuc2013ChemicalGraphsMolecular,vonKorff2019MolecularComplexity,Bottcher2018MoleculesLife}, all of which are equivalent to AT, the depth of the purported experimental validation of AT has been notably limited, especially when contrasted with more comprehensive studies. For instance, a more exhaustive and systematic approach, with proper control experiments involving over 15,000 chemical compounds~\cite{zenilchem} and employing algorithms from the LZ family (or Entropy) and others of greater sophistication (BDM), demonstrated the ability to separate organic from non-organic molecular compounds.

Turing's motivation was to explore what could be automated in the causal mechanisation of operations in Fig.~\ref{figureTreediagrams}B, for example, stepwise by hand using paper and pencil. The shortest among all the computer programs of this type is an upper bound approximation to $\mathbf{K}$. In other words, $\mathbf{K}$ cannot be longer than the length of this diagram. The concept of pathway complexity and the algorithm for copy number instantiated by the assembly index represent a restricted version of a Turing machine or finite automaton. Moreover, a Turing machine is a synonym of an algorithm and vice versa. It is a mistake to think of a Turing machine as a physical object or an object with particular properties or features (such as a head or a tape). All the algorithms in assembly theory are Turing machines, and vice versa.

\begin{figure}[ht!]
\centerline{\includegraphics[scale=0.23]{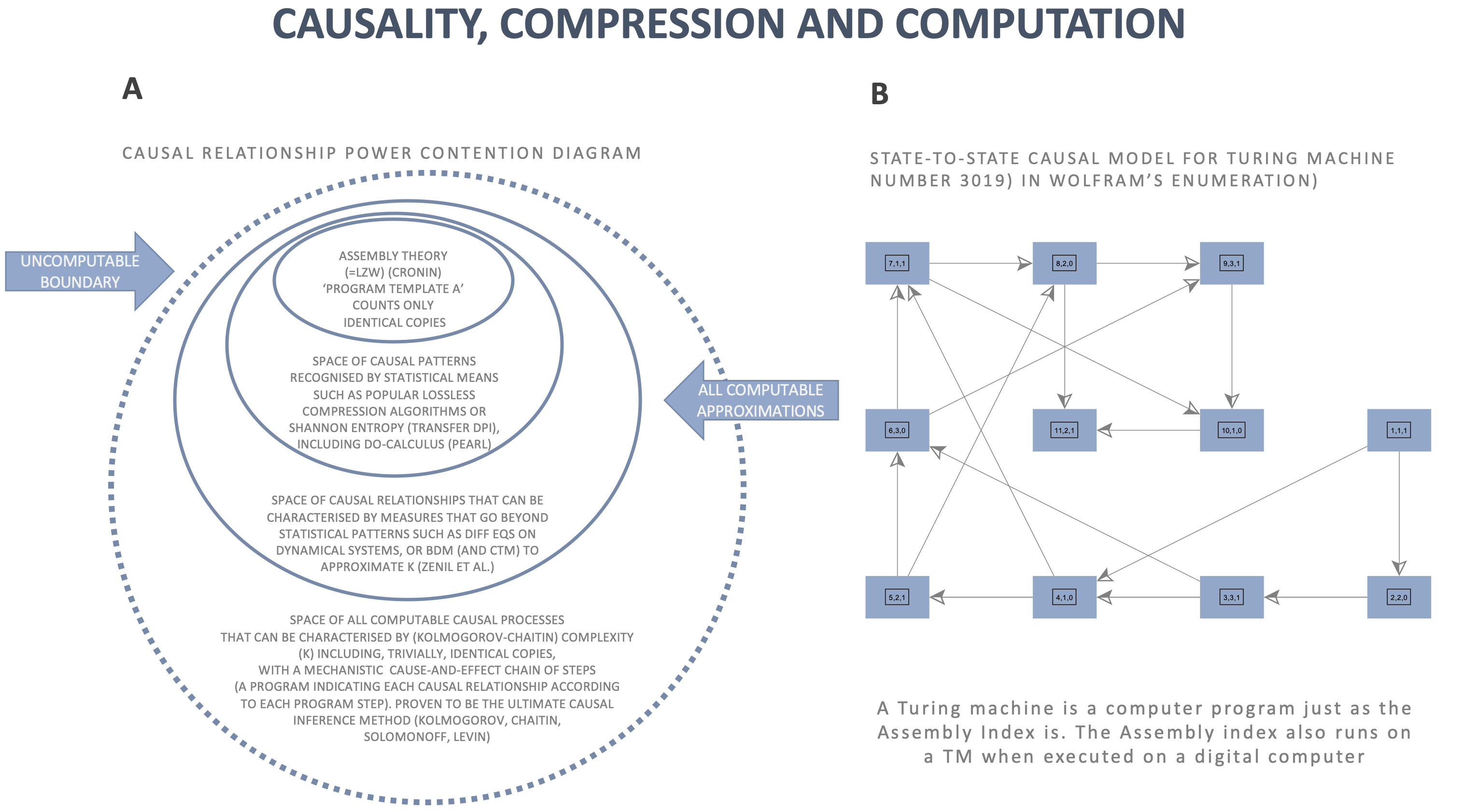}}
\caption{\label{figureTreediagrams} A: The authors of AT have suggested that $ \mathbf{K} $ would be proven to be contained in AT~\cite{fridman}. This Venn diagram shows how AT is connected to and subsumed within algorithmic complexity or ($ \mathbf{K} $) and within the group of statistical compression as proven in this paper (see Sup. Inf.~\ref{supmat}).  B: Causal transition graph of a Turing machine with number 3019 (in Wolfram's enumeration scheme~\cite{Wolfram2002}) with an empty initial condition found by using a computable method (e.g. CTM, \cite{zenil2011}) to explain how the block-patterned string 111000111000 was assembled step-by-step based on the principles of $ \mathbf{K} $ describing the state, memory, and output of the process as a fully causal mechanistic explanation. A Turing machine is simply a procedural algorithm and any algorithm can be represented by a Turing machine. By definition, this is a mechanistic process, and as physical as anything else, not an `abstract' or `unrealisable' process.}
\end{figure}

%\newpage

%\item 
\subsection{A circular argument cannot unify physics and biology}

Central to Assembly Theory (AT) and the public claims made through the authors' university press releases~\cite{prat} is the connection that they make to selection and evolution, maintaining that AT unifies physics and biology and explains and quantifies selection and evolution, both Darwinian and abiotic~\cite{croninnature}. To demonstrate this connection, assuming selectivity in the combination of linear strings $(P)$, the authors compare two schemes for combining linear strings $Q$, one random versus a non-random selection. 
This experiment yields observations of differences between the two $(R)$, and the authors conclude that selectivity $(S)$ exists in molecular assembly, and therefore that AT can explain it.

Observing that a random selection differs from a non-random selection of strings, the authors use this as evidence for ``the presence of selectivity in the combination process between the polymers existing in the assembly pool''. 

Yet, assuming selectivity in the combination of strings and proceeding to say that a selection algorithm is expected to differ from a random one makes for a circular argument. Furthermore, the conclusion $(S)$ reaffirms the initial assumption $(P)$, resulting in a circular argument. 
In a formal propositional chain, it can be represented as $P \implies Q \implies R \implies S \implies P$, where the conclusion merely restates the initial assumption, lacking any validation or verification, empirical or logical, beyond a self-evident tautology.

Whether this sequential reasoning is relevant to how molecules are actually assembled is also unclear, given the overwhelming evidence that objects are not constructed sequentially and that object complexity in living systems is clearly not driven by identical copies only~\cite{antiredundancy}. 

Nevertheless, the question of random \textit{versus} non-random selection and evolution in the context of approximations to algorithmic complexity, including copy counting, was experimentally tested, empirically supported, and reported before in~\cite{genait} following proper basic principles such as a literature search and review, control experiments (comparison to other measures), and validation against existing knowledge in genetics and cell biology.

These circular arguments, lacking a foundation in chemistry, biology, or empirical evidence, lead the authors to propose the concept of ``assembly time''. The suggestion is a time-scale separation between molecule production (assembly time) and discovery time, aiming to unify physics and biology. Time, however, has always been fundamental in evolutionary theory, and claims about an object's history are the foundations of evolutionary theory. AT revisits existing complexity science concepts described in the timeline provided in Fig.~\ref{timeline} but without acknowledgment or attribution.

\begin{figure}[ht!]
\centerline{\includegraphics[scale=0.24]{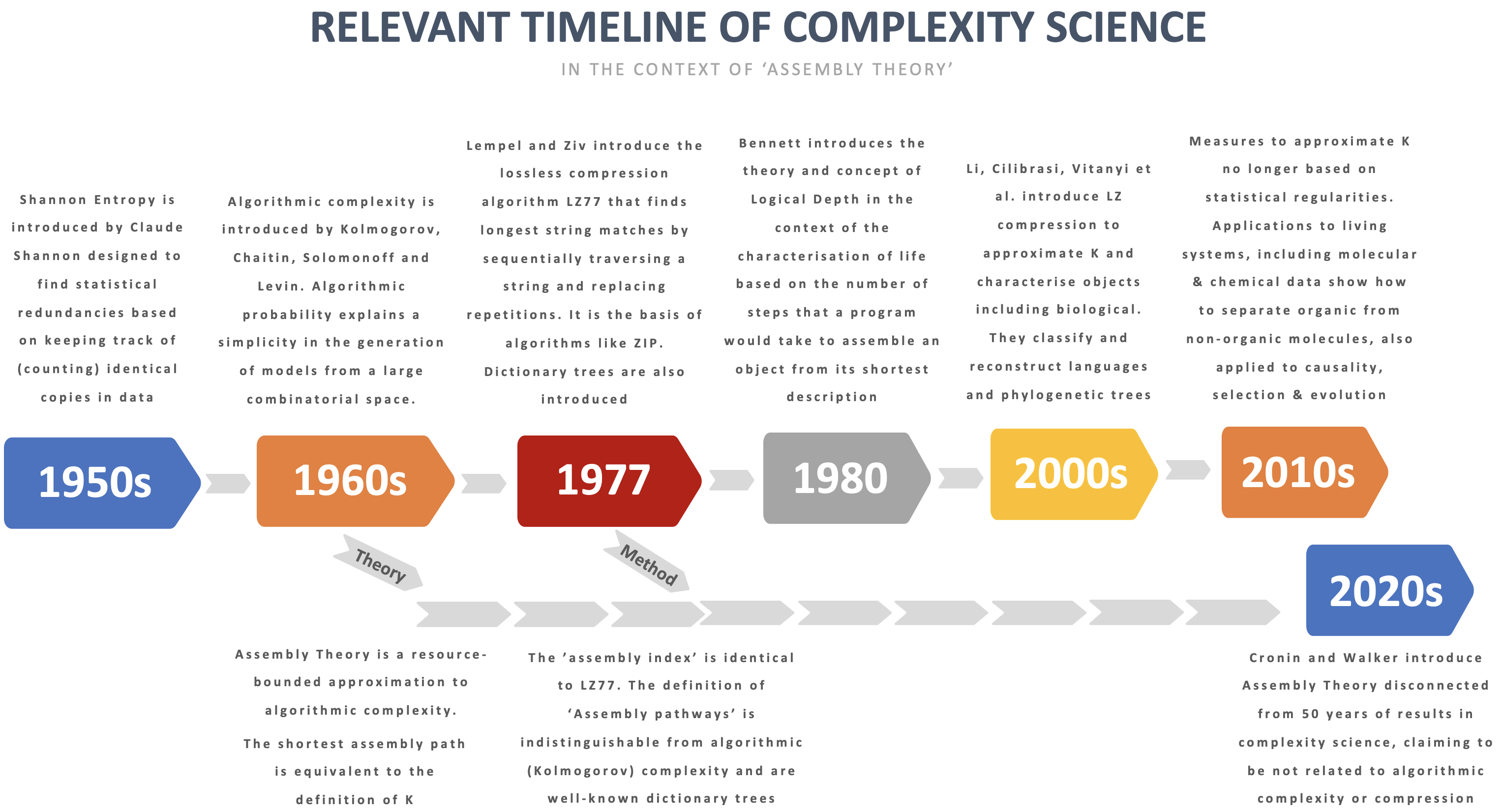}}
\caption{\label{timeline} A timeline of results in complexity science relevant to the claims and results of AT. AT renames several concepts, e.g. dictionary trees as `assembly pathways'; relies heavily on algorithmic probability in its reduction of combinatorial space arguments, without attribution; and, as demonstrated, the assembly index is a LZ compression scheme (proofs provided in the Sup. Inf.~\ref{supmat})
}
\end{figure}
%Yet, based on these circular arguments, not grounded in chemistry, biology, or any ground truth, the authors speculate that such a difference between random and non-random selection motivates the introduction of ``assembly time". Assuming a time-scale separation between the production of molecules (assembly time) and discovery time, i.e., discovery of new objects, the authors claim that this provides a `unification of physics and biology.' A hypothetical transition regime defined from such a time-scale separation would then ``unify key features of life with physics'' according to the authors. Yet, building such a speculative tower of reasoning based on a trivial difference between random and non-random combinations of ad-hoc examples, and a hypothetical time-scale separation not grounded in any new evidence in physics, chemistry, or biology makes AT an exercise that rehashes the theory and advancements of complexity science in the last 50 years.
%\end{enumerate}

\section{Conclusions}

The most popular examples used by the authors of Assembly Theory (AT) in their papers as illustrations of the way their algorithms work, such as ABRACADABRA and BANANA, are traditionally used to teach LZ77, LZ78, or LZW compression in Computer Science courses at university level.
Dictionary trees have been used for pedagogic purposes in computer science for decades.
This would suggest some similarity between AT and compression algorithms, particularly those of a statistical nature based on recursion to previous states, copy counting, and the re-usage of patterns and repetitions.

In this paper, we have shown that the calculation of the assembly index value from the minimum rooted assembly (sub)space is a compression scheme belonging to the LZ family of compression algorithms, rather than merely being similar. Diving into the details of AT's algorithms and methods, we have revealed that they are equivalent to LZW's, and therefore to (Shannon) Entropy-based methods.

Our results demonstrate that an object with a low assembly index has high LZ compressibility (i.e., it is more compressible according to the LZ scheme), and therefore would necessarily display low entropy when generated by i.i.d. stochastic processes (or low entropy rate in ergodic stationary processes in general). In the opposite direction, an object with a high assembly index will have low LZ compressibility, and therefore high entropy.
This means that the methods based on AT, Shannon Entropy, and LZ compression algorithms are indistinguishable from each other with regard to the quantification of complexity of the assembling process, and the claim that these other approaches are incapable of dealing with AT's type of data (capturing `structure' or anything that could supposedly not be characterised by Shannon Entropy or LZW) is inaccurate.

This paper also demonstrates that AT is subsumed into the theory and methods of Algorithmic Information Theory (AIT) as illustrated in Fig.~\ref{figureTreediagrams}A and mathematically proven in the Sup. Inf. section~\ref{supmat}.

In addition, both theoretical concepts and empirical results in~\cite{cronin,marshall_murray_cronin_2017,croninentropy,croninnature} have been reported previously in earlier work in relation to
chemical processes in~\cite{zenilchem}, and to biology in~\cite{zenilchem,zenilfirst,genait,zenilld} but with comparison to other measures, including methods that go beyond traditional statistical compression and are connected to the concept of causal discovery~\cite{iscience,nmi}.

Algorithmic complexity has been shown to be profoundly connected to causality~\cite{bookaid,scholarpedia,nmi,Abrahao2021bEmergenceAIDPTRSA}, and not only in strings. It has also been applied to images~\cite{zenilld}, networks~\cite{iscience,zenilnet,zenilnet2,zenilnet3}, vectors, matrices, and tensors in multiple dimensions~\cite{zenil2d,Abrahao2021AIDistortionsEntropy,Abrahao2020cAIDistortionsCN}; and to chemical structures~\cite{zenilchem}.
In contrast to the statistical compression schemes on which AT is based, and whose assembly index method is a particular case of a compression scheme, exploring other resource-bounded computable approximations to algorithmic complexity~\cite{bdm,nmi} that consider aspects other than traditional statistical patterns such as identical repetitions,
may have more discriminatory power.
This is because they may tell apart cases with a high assembly index value 
and an expectedly low copy number (or frequency of occurrence in the ensemble) from those with low LZ compressibility and high entropy, 
seeming to indicate that an object is more causally disconnected, independent, or statistically random, while actually, it is strongly causally dependent due to non-trivial rewriting rules not captured by statistical means. This is because AT is to statistical correlation, as measures of algorithmic complexity are to causation. Rarely the two are the same (correlation is not causation) and when they are, they are fully captured by other traditional statistical measures such as Shannon Entropy without the introduction of a methodological different framework that adds no more to the correlation problem beyond Shannon Entropy.

If AT has some elements that appear to have discriminatory power~\cite{marshall_murray_cronin_2017,cronin}, it is because of its connection to Shannon Entropy.

One positive aspect of AT is that it offers a graph-like representational approach (albeit more convoluted than necessary) to approximating LZ compression in the specific context of molecular complexity, potentially making it more accessible to a broader and less technically-minded community. However, it is inaccurate and bad practice to ignore, omit or imply the absence of previous work (e.g. \cite{zenilchem,genait,zenildna,licompression}), belittle and devalue the theories and ideas which AT's indexes and measures are entirely dependent upon and connected to (Shannon Entropy, compression, and Kolmogorov complexity), and make disproportionate public claims related to the contributions of AT (with the media and the authors themselves going so far as to call it a theory that `unifies biology and physics' and a `theory of everything').

We agree with the authors on the importance of understanding the origin and mechanisms of the emergence of an open-ended generation of novelty~\cite{santiagoopenendedpaper,Abrahao2021bEmergenceAIDPTRSA}. However, these efforts by the authors are undermined by hyperbolic claims. 

We argue that the results and claims made by AT, in fact, highlight a working hypothesis that information and computation underpin the concepts necessary to explain the processes and building blocks of physical and living systems, concepts which the methods and frameworks of AT have been heavily inspired by~\cite{universaldist,marvinminsky}.

\section{Acknowledgements}%Update funding information of other authors

Felipe S. Abrah\~{a}o acknowledges the partial support of the São Paulo Research Foundation (FAPESP), Grant $2023$/$05593$-$1$.

\bibliography{sn-bibliography.bib}

\newpage

\begin{appendices}

\section{Supplementary information}
\label{supmat}

\subsection{Mathematical framework}\label{sectionDefinitions}

%Besides the notation from \cite{Marshall2019} for assembly theory, we also employ the usual notation for Turing machines and algorithmic complexity.

As in \cite{Marshall2019}, let $ \left( \Gamma , \phi \right) $ denote either an \emph{assembly space} or an \emph{assembly subspace}.
In the interests of simplifying notation, one can refer to $ \left( \Gamma , \phi \right) $ simply as $ \Gamma $ where the edge-labellings given by $ \phi $ are not relevant.
%Note that $ \Gamma $ is an acyclic multigraph and $ \, \phi \colon E( \Gamma ) \to V( \Gamma ) $ is an edge-labeling map.
From \cite{Marshall2019}, we have it that $ c_\Gamma\left( x \right) $ denotes the \emph{assembly index} of the object $ x $ in the assembly space $ \Gamma $.

Let $ \mathcal{ S } = \left( \mathbf{ \Gamma } , \mathbf{ \Phi } , \mathcal{ F }   \right) $ be an \emph{infinite assembly space}, where every assembly space $ \Gamma \in \mathbf{ \Gamma } $ is finite, $ \mathbf{ \Phi } $ is the set of the corresponding edge-labelling maps $ \phi_\Gamma $ of each $ \Gamma $, and $ \mathcal{ F } = \left( f_1 , \dots , f_n , \dots \right) $ is the infinite sequence of embeddings (in which each embedding is also an \emph{assembly map} as in \cite{Marshall2019}) that ends up generating $ \mathcal{ S } $.
That is, each $ \, f_i \colon \left\{ \Gamma_i \right\} \subseteq \mathbf{ \Gamma } \to \left\{ \Gamma_{ i + 1 } \right\} \subseteq \mathbf{ \Gamma } \, $ is a particular type of assembly map that embeds a single assembly subspace into a larger assembly subspace so that the resulting sequence of nested assembly subspaces defines a total order $ \preceq_{ \mathcal{ S } } $, where 
\[ 
\left( \Gamma_i , \phi_{ \Gamma_i } \right) \preceq_{ \mathcal{ S } } \left( \Gamma_{ i + 1 } , \phi_{ \Gamma_{ i + 1 } } \right)  \text{ \textit{iff} }   f_i\left( \Gamma_i  \right) = \Gamma_{ i + 1 } 
\text{ .}
\]
In other words, $ \mathcal{ S } $ is the nested family of all possible finite assembly spaces from the same basis set
(i.e., the root vertex that represents the set of all basic building blocks) such that $ \mathcal{ S } $ is infinite and enumerable by a program.\footnote{Note that an alternative proof of the results presented in this paper can be achieved just with the (more general) assumption that the assembly space may be finite but only needs to be sufficiently large in comparison to the deceiving program in Lemma~\ref{thmLargeMA}.}

The source vertex $ x $ of the edge $ e \in E\left( \Gamma \right) $ is given by the function $ s_\Gamma\left( e \right) = x $.
The target vertex $ y $ of the edge $ e \in E\left( \Gamma \right) $ is given by the function $ t_\Gamma\left( e \right) = y $.
Let $ \gamma = z \dots y $ denote an arbitrary path from $ z \in B_\mathcal{ S } $ to some terminal $ y \in V\left( \mathcal{ S } \right) $ in $ \mathcal{ S } $, where $ B_\mathcal{ S } $ is the \emph{basis set} (i.e., the finite set of basic building blocks) of $ \mathcal{ S } $ and $ V\left( \mathcal{ S } \right) $ is the set of vertices of $ \mathcal{ S } $.
Let $ \gamma_x $ denote a rooted path from some $ z \in B_\mathcal{ S } $ to the object $ x \in V\left( \mathcal{ S } \right) $.
All of these apply analogously to $ \Gamma $ instead of $ \mathcal{ S } $.

As in \cite{Marshall2019}, let $ { \Gamma^* }_{ x } $ denote a \emph{minimum rooted assembly subspace} of $ \Gamma $ from which the \emph{assembly index} 
\[ \begin{array}{lccc}
	c_\Gamma \colon & \Gamma \subset \mathcal{ S }  & \to & \mathbb{N} \\
	& x \in V\left( \Gamma \right)  & \mapsto & c_\Gamma\left( x \right)
\end{array} \]
calculates the augmented cardinality $ c_\Gamma\left( x \right) $, i.e., the number of vertices in this minimum rooted assembly subspace, except for those in the basis set $ B_\Gamma $.
The \emph{longest} rooted paths $ \gamma_x^{ \max } $ of $ { \Gamma^* }_{ x } $ end in the object/vertex $ x \in V\left( \Gamma \right) $, i.e., $ x $ is the terminal vertex of the path $ \gamma_x^{ \max } $ or, equivalently, $ x $ is the target vertex of the latest oriented edge in $ \gamma_x^{ \max } $.
The \emph{shortest} rooted paths $ \gamma_x^{ \min } $ of $ { \Gamma^* }_{ x } $ end in the object/vertex $ x \in V\left( \Gamma \right) $, i.e., $ x $ is the terminal vertex of the path $ \gamma_x^{ \min } $ or, equivalently, $ x $ is the target vertex of the latest oriented edge in $ \gamma_x^{ \min } $.

Let
$ \; A\# \colon \mathbf{X} \to \mathbb{N} \; $ be a function from the set $ \mathbf{X} $ of possible ensembles (composed of objects) into the set of natural numbers. 
As in \cite{croninnature}, this function defines the \emph{assembly number} $ A\#\left( X \right) $ of an ensemble $ X $ such that
\begin{equation}\label{equationAssemblynumber}
	\begin{array}{lccc}
		A\# \colon & \mathbf{X}  & \to & \mathbb{N} \\
		& X & \mapsto & A\#\left( X \right) =
			A\left( N_T , N , \left( n_1 , \dots , n_N \right) , \left( a_1 , \dots , a_N \right) \right)
	\end{array} 
	\text{ ,}
\end{equation}
where:
\begin{itemize}
	\item $ N_T $ is the total number of objects in the ensemble $ X $;
	
	\item $ N $ is the total number of unique objects in the ensemble $ N_T $;
	
	\item each $ n_i $ corresponds to the copy number of the (unique) object $ x_i $ with $ 1 \leq i \leq N $;
	
	\item $ a_i = c_\Gamma\left( x_i \right) $, where $ \Gamma $ is the assembly space that contains the object $ x_i $;
	
	\item $ A\left( \cdots \right) $ is the \emph{assembly of the ensemble} $ X $ calculated by \[ A\left( N_T , N , \left( n_1 , \dots , n_N \right) , \left( a_1 , \dots , a_N \right) \right) 
	=
	\sum\limits_{ i = 1 }^{ N } e^{ a_i } \left( \frac{ n_i - 1 }{ N_T } \right) \text{ .} \] 
\end{itemize}

According to the assumptions and claims in \cite{cronin,Marshall2019,croninnature} (as also discussed in \cite{salient}), notice that the values of $ N_T $, $ N $, $ n_i $, $ a_i $ are retrievable by an effective procedure from a given (finite) ensemble $ X $, computational methods which are proposed by Assembly Theory \cite{cronin,Marshall2019,croninnature}.
Thus, function $ A\# $ is a recursive function (i.e., a function that an algorithm can calculate) given a finite ensemble $ X $, a property which the authors of \cite{cronin,Marshall2019,croninnature}falsely claim to be one of the advantages of Assembly Theory over algorithmic complexity, because this template was computable already to begin with, so there was no room for uncomputability claims.

%As usual, let $ \mathbf{U} $ be a universal Turing machine on a universal programming language $ \mathbf{L} $.
%Let $ \mathbf{U}(x) $ denote the output of the universal Turing machine $\mathbf{U}$ when $  x \in \mathbf{L} $ is given as input in its tape.
%Let $ \left< \, \cdot \, , \, \cdot \, \right> $ denote an arbitrary recursive bijective pairing function \cite{Downey2010,liandvitanyi} so that the bit string $ \left< \, \cdot \, , \, \cdot \, \right> $ encodes the pair $ \left( x , y \right) $, where $ x , y \in \mathbb{N} $.
%Note that this notation can be recursively extended to $ \left< \cdot \, , \, \dots \, , \, \cdot   \right> $ in order to represent the encoding of $n$-tuples.
%The Big-\textbf{O} notation $f(x)=\mathbf{O}( g(x) )$ denotes the usual \emph{weak} asymptotic dominance when function $f$ is asymptotically upper bounded by function $g$.

Let $ \left< V\left( \Gamma  \right) \right> $ be an encoding of the set $ V\left( \Gamma  \right) $ of vertices of the assembly space (or subspace) $ \Gamma $.
Notice that the encoding formalism and the ordering of the vertices belonging to $ \Gamma $ are arbitrary.
The only condition is that once the encoding method is (arbitrarily) chosen, it is fixed and remains the same for any possible assembly space (or subspace) $ \Gamma $.
%Let $ \left< V\left( { \Gamma^* }_{ x }  \right) \right> $ be an encoding of the set of vertices of the minimum rooted assembly subspace of $ \Gamma $ from which the assembly index calculates the augmented cardinality that defines the value $ c_\Gamma\left( y \right) $.
Analogously, let $ \left< \gamma \right> $ denote an arbitrarily fixed encoding of a sequence/path $ \gamma $ composed of contiguous oriented edges together with the starting vertex/object in the basis set $ B_\Gamma \subseteq B_{ \mathcal{ S } } $, the sequence which defines the path $ \gamma $ in $ \Gamma $.

We have it that the (prefix) \emph{algorithmic complexity}, denoted by
$ \mathbf{K}\left( x \right) $, is the length of the shortest prefix-free (or self-delimiting) program that outputs the encoded object $ x $ when this program is run using a (universal) programming language.

In accordance with the claims made in \cite{cronin,marshall_murray_cronin_2017,Marshall2019,croninnature,croninentropy}, note that the computability and feasibility of Assembly Theory's methods directly imply the existence of algorithms that can enumerate the assembly spaces.
Given any set of (physical, chemical, and/or biological) rules for assembling molecules that Assembly Theory arbitrarily chooses to be based on, one has it that for any particular pathway assembly resulting in a molecule, there is a corresponding algorithm for which one can apply the methods described in \cite{cronin,marshall_murray_cronin_2017,Marshall2019} to calculate the pathway probability, and so on.
This existence is trivially guaranteed to hold as long as the calculation of the assembly index is recursive from a given assembly space, which is already an assumed condition in \cite{cronin,marshall_murray_cronin_2017,Marshall2019};
moreover, any upper bound for the computational resources necessary to compute the assembly index straightforwardly implies the existence of an upper bound for the computational resources necessary for a program to run.
%For the sake of simplifying notation, let $ \mathbf{U}_\Gamma $ denote the Turing submachine $ \mathbf{U}/{ f }_\Gamma $.
%In this case, the function $ f_\Gamma $ is the partial function that returns what $ \mathbf{U} $ can compute with some $ x \in \mathbf{L}_\Gamma $ as input, where $ \mathbf{L}_\Gamma \subseteq \mathbf{L} $ is a (non-universal) programming language such that every allowed generative process---i.e., every generative process that is deemed physically possible by assembly theory's chosen method, in case the objects are molecules---of an assembly space is bijectively computed (or emulated) by a corresponding $ \mathbf{U}\left( x \right) $.
For every permitted generative process that can assemble objects and thereby build another object, there is an algorithm that performs this process.
Conversely, for every object, one has it that there is a corresponding generative process allowed by Assembly Theory, a process which is effected by an algorithm.
Therefore, according to the assumptions in \cite{cronin,marshall_murray_cronin_2017,Marshall2019,croninnature,croninentropy}, there is a formal theory $ \mathbf{F} $ that contains Assembly Theory and all the procedures effected by algorithms, including the chosen method for calculating the assembly index of an object, the program that decides whether or not the criteria for building the assembly spaces are met, and so on.

\subsection{Assembly theory is an approximation to algorithmic complexity}\label{sectionoAToverestimation}
%Felipe: check with the final version of the proof.

One can demonstrate in Lemma~\ref{thmLargeMA} and Corollary~\ref{thmInfmanyLargeMA} that the assembly index $ c_\Gamma\left( y \right) $ (or the assembly number $ A\#\left( X \right) $ ) is an approximation to algorithmic complexity $ \mathbf{K}\left( y \right) $ (or, respectively, $ \mathbf{K}\left( X \right) $) that in general \emph{overestimates} its value as much as desired.

As in \cite{salient}, the main route to achieving the following theoretical results is to construct a randomly generated program that receives a formal theory (which contains all the procedures and statistical criteria in Assembly Theory) as input.
Then, it searches for an object $ y $ in an assembly space with a sufficiently high assembly index $ c_\Gamma\left( y \right) $ so as to overcome any arbitrarily chosen error margin $ k $ between $ c_\Gamma\left( y \right) $ and the algorithmic complexity $ \mathbf{K}\left( y \right) $.

\begin{lemma}\label{thmLargeMA}
	Let $ \mathcal{ S } $ be enumerable by an algorithm.
	Let $ \mathbf{F} $ be an arbitrary formal theory that contains Assembly Theory, including all the decidable procedures of
	%	the chosen statistical method, 
	the chosen method for calculating the assembly index of an object for a nested subspace of $ \mathcal{ S } $,
	and the program that decides whether or not the criteria for building the assembly spaces are met.
	Let $ k \in \mathbb{N} $ be an arbitrarily large natural number.
	Then, there are 
	$ \Gamma \subset \mathcal{ S } $ and $ y \in V\left( \Gamma \right) $ such that 
	\begin{equation}\label{eqLargeMA1}
		\mathbf{K}\left( y \right) + k 
		\leq
%		\left| \mathrm{ p }_y \right| + k + \mathbf{O}(1)
%		\leq
		c_\Gamma\left( y \right)
		\text{ ,}
		%		\mathbf{K}\left( y \right) \leq \mathbf{K}_\mathcal{ S }\left( y \right) + \mathbf{O}(1) + k \leq c_\Gamma\left( y \right)
		%		\text{ ,}
	\end{equation}
	where the function $ c_\Gamma\left( y \right) $ gives the assembly index of the object $ y $ in the assembly space $ \Gamma $ (or $ \mathcal{ S } $).

	\begin{proof}\label{proofthmLargeMA}
		Let $ p $ be a bit string that represents an algorithm that receives $ \mathbf{F} $ and $ k $ as inputs.
		Then, it calculates $ \left| p \right| + \left| \mathbf{F} \right| + \mathbf{O}\left( \log_2\left( k \right) \right) + k $ and enumerates $ \mathcal{ S } $ while calculating $ c_\Gamma\left( x \right) $ of the object (or vertex) $ x \in V\left( \Gamma \right) \subset V\left( \mathcal{ S } \right) $ at each step of this enumeration.
		Finally, the algorithm returns the first object $ y \in  V\left( \mathcal{ S } \right) $ for which 
		\begin{equation}\label{eqProofLargeMA}
			\left| p \right| + \left| \mathbf{F} \right| + \mathbf{O}\left( \log_2\left( k \right) \right) + k + \mathbf{O}(1) \leq c_\Gamma\left( y \right)
		\end{equation}
		holds.
		In order to demonstrate that $ p $ always halts, just note that $ \mathcal{ S } $ is infinite computably enumerable.
		Also, for any value of $ c_{ \Gamma' }\left( z \right) $ for some $ z \in V\left( \Gamma' \right) \subset V\left( \mathcal{ S } \right)  $, there is only a finite number of minimum rooted assembly subspaces (starting on any object in $ B_\mathcal{ S } $ and ending on $ z $) whose augmented cardinality is $ c_{ \Gamma' }\left( z \right) $,
		where $ B_\mathcal{ S } $ is the basis (i.e., the finite set of basic building blocks \cite{cronin}) of $ \mathcal{ S } $.
		This implies that there is an infinite number of distinct values of $ c_{ \Gamma' }\left( z \right) $.
		Now, let $ \mathrm{ p }_y = \left< k , \mathbf{F} , p \right> $. 
		Finally, from Equation~\eqref{eqProofLargeMA} and basic properties in AIT, we have it that
		\begin{equation}\label{eqProofLargeMA2}
			\begin{aligned}
				\mathbf{K}\left( y \right) + k
				\leq
				\left| \mathrm{ p }_y \right| + k + \mathbf{O}(1)
				\leq 
				\left| p \right| + \left| \mathbf{F} \right| + \mathbf{O}\left( \log_2\left( k \right) \right) + k + \mathbf{O}(1) \leq c_\Gamma\left( y \right)
				%				\text{ .}
			\end{aligned}
		\end{equation}
		holds for some sufficiently large $ k $.
	\end{proof}
\end{lemma}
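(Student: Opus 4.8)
The plan is to run a Berry-paradox--style search: I would construct a single program that, given $ \mathbf{F} $ and $ k $, combs through the assembly space and returns the first object whose assembly index is provably larger than everything needed to describe that very program. Such an object then has small algorithmic complexity---it is pinned down by a short generating program---while carrying, by construction, an assembly index at least $ k $ larger, which is exactly the gap claimed in Equation~\eqref{eqLargeMA1}. The whole argument is an instance of the standard incompressibility-via-self-reference technique, transplanted from bit strings to objects indexed by $ c_\Gamma $.

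Concretely, I would proceed in four steps. First, I would establish that the assembly index is \emph{unbounded} along $ \mathcal{ S } $: since each fixed value of $ c $ is realized by only finitely many minimum rooted assembly subspaces ending on a given vertex, and $ \mathcal{ S } $ is infinite and computably enumerable, the indices encountered during enumeration cannot stay bounded---this simultaneously guarantees that the search below halts and that any prescribed threshold is eventually exceeded. Second, I would write the searching program $ p $: it reads $ \mathbf{F} $ and $ k $, enumerates $ \mathcal{ S } $ using the decidable space-building criteria inside $ \mathbf{F} $, computes $ c_\Gamma\left( x \right) $ at each vertex via the chosen recursive assembly-index method, and outputs the first $ y $ with $ c_\Gamma\left( y \right) $ exceeding the target $ \left| p \right| + \left| \mathbf{F} \right| + \mathbf{O}\left( \log_2\left( k \right) \right) + k + \mathbf{O}(1) $. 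Third, I would bound $ \mathbf{K}\left( y \right) $ from above: because $ y $ is the unique output of the composite program $ \mathrm{ p }_y = \left< k , \mathbf{F} , p \right> $, invariance gives $ \mathbf{K}\left( y \right) \leq \left| \mathrm{ p }_y \right| + \mathbf{O}(1) \leq \left| p \right| + \left| \mathbf{F} \right| + \mathbf{O}\left( \log_2\left( k \right) \right) + \mathbf{O}(1) $, where the logarithmic term is the cost of self-delimitingly encoding $ k $. Fourth, I would chain the two estimates: adding $ k $ to the complexity bound and invoking the stopping condition of $ p $ yields $ \mathbf{K}\left( y \right) + k \leq c_\Gamma\left( y \right) $.

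The step I expect to be the main obstacle is the \emph{self-reference} concealed in the threshold: the program $ p $ must use its own length $ \left| p \right| $ to define the bound it searches against. I would discharge this with the recursion (fixed-point) theorem, building $ p $ so that it has access to a description of itself, so that the $ \left| p \right| $ appearing in the stopping criterion is genuinely the size of the program doing the searching. A secondary bookkeeping point is ensuring the additive overheads are dominated: the encoding of $ k $ costs $ \mathbf{O}\left( \log_2\left( k \right) \right) $ and the universal simulation costs $ \mathbf{O}(1) $, both eventually swamped by the freely chosen $ k $, so the final inequality holds once $ k $ is taken sufficiently large. Everything else---the enumeration, the recursiveness of $ c_\Gamma $ granted by the standing assumptions on $ \mathbf{F} $, and the additive invariance of $ \mathbf{K} $ under prefixing by $ \left< k , \mathbf{F} , \cdot \right> $---is routine.
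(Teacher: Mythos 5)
Your proposal is correct and takes essentially the same route as the paper's own proof: the same Berry-paradox-style search program $p$ that returns the first $y$ whose assembly index exceeds $\left| p \right| + \left| \mathbf{F} \right| + \mathbf{O}\left( \log_2\left( k \right) \right) + k + \mathbf{O}(1)$, the same halting argument via the finiteness of minimum rooted assembly subspaces per index value (hence unboundedness of $c_\Gamma$ along the enumeration), and the same chaining of $\mathbf{K}\left( y \right) \leq \left| \left< k , \mathbf{F} , p \right> \right| + \mathbf{O}(1)$ against the stopping criterion. Your explicit invocation of the recursion theorem to license $p$'s use of its own length in the threshold is a detail the paper leaves implicit, and is a sound way to discharge it.
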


From this proof of Lemma~\ref{thmLargeMA}, one can directly conclude that:

\begin{corollary}\label{thmInfmanyLargeMA}
	There are infinitely many $ \Gamma \subset \mathcal{ S } $ for which Equation~\eqref{eqLargeMA1} in Lemma~\ref{thmLargeMA} holds.
	
	\begin{proof}
	
		From the proof of Lemma~\ref{thmLargeMA} we know that there are infinitely many distinct values of $ c_{ \Gamma' }\left( z \right) $.
		Let $ p_2 $ be a slight variation of program $ p $ in the proof of Lemma~\ref{thmLargeMA} such that, in addition to $ \mathbf{F} $ and $ k $, it also receives a natural number $ i \in \mathbb{N} $ as input.
		Then, it enumerates $ \mathcal{ S } $ while calculating $ c_\Gamma\left( x \right) $ of the object (or vertex) $ x \in V\left( \Gamma \right) \subset V\left( \mathcal{ S } \right) $ at each step of this enumeration.
		In the last step, program $ p_2 $ returns the $ i $-th first object $ y \in V( \mathcal{ S } ) $ for which \begin{equation}\label{eqProofCorollaryInmanyLargeMA}
					\left| p \right| + \left| \mathbf{F} \right| + \mathbf{O}\left( \log_2\left( k \right) \right) + k + \mathbf{O}\left( \log_2\left( i \right) \right) + i + \mathbf{O}(1) \leq c_\Gamma\left( y \right)
	\end{equation}
	holds.
	(Notice that Equation~\eqref{eqProofCorollaryInmanyLargeMA} differs from Equation~\eqref{eqProofLargeMA}).
	Then, for each $ i > 0 $, one will obtain a distinct object $ y $ such that
	\begin{equation}\label{eqProofLargeMA3}
		\begin{aligned}
			\mathbf{K}\left( y \right) + k
			\leq 
			\left| p \right| + \left| \mathbf{F} \right| + \mathbf{O}\left( \log_2\left( k \right) \right) + k + \mathbf{O}\left( \log_2\left( i \right) \right) + i + \mathbf{O}(1) \leq c_\Gamma\left( y \right)
			%				\text{ .}
		\end{aligned}
	\end{equation}	
	holds.
	\end{proof}
	
\end{corollary}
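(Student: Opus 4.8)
The plan is to recycle the self-referential ``deceiving program'' from the proof of Lemma~\ref{thmLargeMA}, but to parametrise it by an extra index $i \in \mathbb{N}$ so that different values of $i$ are forced to produce different witnessing objects. The single witness $y$ of the Lemma then becomes an infinite family $\{ y_i \}_{i \in \mathbb{N}}$, each lying in its own minimum rooted assembly subspace $\Gamma_i \subset \mathcal{S}$, and each satisfying Equation~\eqref{eqLargeMA1}.

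First I would define a program $p_2$ that, in addition to the theory $\mathbf{F}$ and the margin $k$, reads a natural number $i$. As before, $p_2$ enumerates $\mathcal{S}$ and computes $c_\Gamma(x)$ at every step, but now it returns the $i$-th object $y$ in the enumeration whose assembly index clears the threshold in Equation~\eqref{eqProofCorollaryInmanyLargeMA}, namely $|p| + |\mathbf{F}| + \mathbf{O}(\log_2 k) + k + \mathbf{O}(\log_2 i) + i + \mathbf{O}(1) \leq c_\Gamma(y)$. The crucial change relative to the Lemma is the two extra summands $\mathbf{O}(\log_2 i) + i$: the $\mathbf{O}(\log_2 i)$ term accounts for the self-delimiting encoding of the new input $i$ inside $\langle k , \mathbf{F} , i , p \rangle$, while the additive $i$ is what forces the threshold, and hence the assembly index of the returned object, to grow without bound in $i$.

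Next I would verify that $p_2$ halts for every $i$, which is exactly where the counting fact established inside the proof of Lemma~\ref{thmLargeMA} does the work. Since there are infinitely many distinct values of $c_{\Gamma'}(z)$ while each such value is realised by only finitely many minimum rooted assembly subspaces, objects of arbitrarily large assembly index occur infinitely often along the enumeration of $\mathcal{S}$; hence, for any fixed threshold there are always at least $i$ objects clearing it, so the $i$-th one exists and $p_2$ terminates. Feeding $\langle k , \mathbf{F} , i , p \rangle$ into the universal machine then yields $\mathbf{K}(y) \leq |p| + |\mathbf{F}| + \mathbf{O}(\log_2 k) + \mathbf{O}(\log_2 i) + \mathbf{O}(1)$, and chaining this with the halting threshold gives Equation~\eqref{eqProofLargeMA3}, i.e. $\mathbf{K}(y) + k \leq c_\Gamma(y)$, which is precisely Equation~\eqref{eqLargeMA1}.

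Finally, distinctness of the witnesses, and thus that there are genuinely \emph{infinitely many} such $\Gamma$, follows because $c_{\Gamma_i}(y_i)$ is bounded below by a quantity that is unbounded in $i$ (on account of the $+i$ term), so an unbounded sequence of assembly-index values cannot be attained by only finitely many distinct objects; infinitely many of the $y_i$, and hence of the subspaces $\Gamma_i$, are therefore pairwise distinct. The main obstacle I anticipate is not the inequality itself but pinning down this distinctness rigorously: one must rule out that a single object or subspace is returned for cofinitely many $i$, and it is precisely the monotone growth of the $+i$ threshold together with the finiteness-per-index-value fact inherited from Lemma~\ref{thmLargeMA} that closes this gap.
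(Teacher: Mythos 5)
Your proposal is correct and follows essentially the same route as the paper's own proof: the same $i$-parametrised variant $p_2$ of the deceiving program, the same augmented threshold with the $\mathbf{O}\left(\log_2\left(i\right)\right)+i$ terms, and the same appeal to the infinitude of distinct assembly-index values to guarantee halting. Your closing remark on distinctness is in fact slightly more careful than the paper, which simply asserts that each $i$ yields a distinct witness, but this is a minor refinement rather than a different argument.
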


\subsubsection{Assembly number and algorithmic complexity}\label{sectionDistortionsforAN}

Now we can extend Lemma~\ref{thmLargeMA} to the assembly number $  A\# $, a value which is defined upon an ensemble $ X $.
The main idea of Lemma~\ref{thmLargeAssemblynumber} is that the class of all possible ensembles of objects that assembly spaces can possibly cover is also enumerable by an algorithm, based on the fact that assembly spaces are enumerable by an algorithm.
In this manner, as done in Lemma~\ref{thmLargeMA}, one can make the error margins (in comparison to algorithmic complexity) as large as one wishes.

\begin{lemma}\label{thmLargeAssemblynumber}
	Let $ \mathcal{ S } $ be enumerable by an algorithm.
	Let $ \mathbf{F} $ be an arbitrary formal theory that contains Assembly Theory, including: 
	all the decidable procedures of
	%	the chosen statistical method, 
	the chosen method for calculating the assembly index $ c_\Gamma $ of an object for a nested subspace of $ \mathcal{ S } $ and the assembly number $ A\# $ of a given ensemble $ X $;
	the program that decides whether or not the criteria for building the assembly spaces are met;
	and the program that decides whether or not the criteria for building ensembles of objects are met. 
	Let $ k \in \mathbb{N} $ be an arbitrarily large natural number.
	Then, there is $ y \in \mathbf{X} $ such that 
	\begin{equation}\label{eqLargeAN1}
		\mathbf{K}\left( X \right) + k 
		\leq
%		\left| \mathrm{ p }_y \right| + k + \mathbf{O}(1)
%		\leq
		A\#\left( X \right)
		\text{ ,}
		%		\mathbf{K}\left( y \right) \leq \mathbf{K}_\mathcal{ S }\left( y \right) + \mathbf{O}(1) + k \leq c_\Gamma\left( y \right)
		%		\text{ ,}
	\end{equation}
	where the function $ \; A\# \colon \mathbf{X} \to \mathbb{N} \; $ gives the assembly number of the ensemble $ X \in \mathbf{X} $.

	\begin{proof}\label{proofthmLargeAssemlynumber}
		Remember the definition of assembly number $ A\# $ such that $ A\#\left( X \right) = \sum\limits_{ i = 1 }^{ N } e^{ a_i } \left( \frac{ n_i - 1 }{ N_T } \right) $, where the parameters $ N_T $, $ N $, $ n_i $, $ a_i $ are computably retrievable from a given ensemble $ X $.
		Since $ \mathcal{S} $ is infinite computably enumerable and the theory $ F $ can always decide whether or not an arbitrary set $ Y $ is an ensemble from which $ A\# $ can be calculated, one has it that the set $ \mathbf{X} $ (of all possible finite ensembles) is infinite computably enumerable. 
		From here on, the proof follows analogously to the proof of Lemma~\ref{thmLargeMA}.
	\end{proof}
\end{lemma}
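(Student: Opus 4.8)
The plan is to replay, essentially verbatim, the strategy of Lemma~\ref{thmLargeMA}, now applied to the ensemble-valued quantity $A\#$ rather than to the object-valued $c_\Gamma$. The first step is to verify that the class $\mathbf{X}$ of all admissible finite ensembles is computably enumerable. Since $\mathcal{S}$ is enumerable by an algorithm and $\mathbf{F}$ is assumed to contain the decision procedure that determines whether an arbitrary finite set $Y$ qualifies as an ensemble on which $A\#$ is defined, one can dovetail the enumeration of the vertices and subspaces of $\mathcal{S}$ with this membership test to produce an effective enumeration $X_1, X_2, \dots$ of $\mathbf{X}$, and at each stage compute $A\#(X_j)$ from the parameters $N_T$, $N$, $(n_1,\dots,n_N)$, $(a_1,\dots,a_N)$, all of which are recursively retrievable from $X_j$ by hypothesis.

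Second, I would establish that $A\#$ is unbounded along this enumeration. Recalling $A\#(X) = \sum_{i=1}^{N} e^{a_i}\left(\frac{n_i - 1}{N_T}\right)$, it suffices to exhibit, for every target magnitude, an ensemble whose assembly number exceeds it. Taking $X$ to consist of $N_T$ identical copies of a single object $x$ with assembly index $a = c_\Gamma(x)$ gives $N = 1$, $n_1 = N_T$, and hence $A\#(X) = e^{a}\,\frac{N_T - 1}{N_T} \geq \tfrac{1}{2}\,e^{a}$ for $N_T \geq 2$. By the proof of Lemma~\ref{thmLargeMA}, the assembly index $a$ ranges over infinitely many distinct, arbitrarily large values over $\mathcal{S}$, so the exponential factor $e^{a}$ forces $A\#(X)$ to grow without bound even after the normalisation by $N_T$. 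This also renders the argument robust to any flooring implicit in the stated codomain $\mathbb{N}$ of $A\#$.

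Third, I would construct the deceiving program. Let $p$ be a bit string encoding an algorithm that, on inputs $\mathbf{F}$ and $k$, first computes the threshold $T = \left| p \right| + \left| \mathbf{F} \right| + \mathbf{O}\left(\log_2 k\right) + k + \mathbf{O}(1)$, then enumerates $\mathbf{X}$ as above and returns the first $X$ for which $T \leq A\#(X)$. Halting is guaranteed by the unboundedness established in the previous step, so $p$ is total. Setting $\mathrm{p}_X = \left< k , \mathbf{F} , p \right>$, this composite program outputs $X$, whence $\mathbf{K}(X) \leq \left| \mathrm{p}_X \right| + \mathbf{O}(1)$; mirroring the final inequality chain in the proof of Lemma~\ref{thmLargeMA} then yields $\mathbf{K}(X) + k \leq \left| \mathrm{p}_X \right| + k + \mathbf{O}(1) \leq \left| p \right| + \left| \mathbf{F} \right| + \mathbf{O}\left(\log_2 k\right) + k + \mathbf{O}(1) \leq A\#(X)$ for sufficiently large $k$, which is exactly Equation~\eqref{eqLargeAN1}.

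I expect the main obstacle to be the second step rather than the formal coding. The subtlety is that the normalisation term $\frac{n_i - 1}{N_T}$ could in principle dampen the exponential growth of $e^{a_i}$, and one must also ensure that the computable enumeration actually \emph{reaches} ensembles of arbitrarily high assembly number, not merely that such ensembles exist abstractly. Concentrating the copy count on a single high-index object neutralises the normalisation, and the unboundedness of $c_\Gamma$ inherited from Lemma~\ref{thmLargeMA} supplies the required stream of large-$A\#$ ensembles along the enumeration; once these two points are secured, the remainder is a routine transcription of the earlier argument.
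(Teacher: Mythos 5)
Your proposal is correct and follows essentially the same route as the paper, which itself simply establishes the computable enumerability of $\mathbf{X}$ and then declares the rest ``analogous to the proof of Lemma~\ref{thmLargeMA}.'' Your second step --- explicitly witnessing the unboundedness of $A\#$ via a single-object ensemble with $N=1$, $n_1=N_T$, so that $A\#(X)\geq \tfrac{1}{2}e^{a}$ grows with the assembly index --- is a detail the paper leaves implicit in its ``analogously'' but which is genuinely needed for the halting argument, so you have if anything filled a small gap rather than deviated.
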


In the same manner as in Corollary~\ref{thmInfmanyLargeMA} with respect to Lemma~\ref{thmLargeMA}, one can obtain the following corollary:

\begin{corollary}\label{thmInfmanyLargeAN}
	There are infinitely many $ X \in \mathbf{X} $ for which Equation~\eqref{eqLargeAN1} in Lemma~\ref{thmLargeAssemblynumber} holds.
	
	\begin{proof}
		Analogous to the proof of Corollary~\ref{thmInfmanyLargeMA}.
	\end{proof}
	
\end{corollary}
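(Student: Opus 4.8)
The plan is to carry the argument of Corollary~\ref{thmInfmanyLargeMA} over wholesale, transposing it from objects $y \in V(\mathcal{S})$ scored by the assembly index $c_\Gamma$ to ensembles $X \in \mathbf{X}$ scored by the assembly number $A\#$, exactly as Lemma~\ref{thmLargeAssemblynumber} transposed Lemma~\ref{thmLargeMA}. First I would recall from the proof of Lemma~\ref{thmLargeAssemblynumber} the two structural facts I need: that the set $\mathbf{X}$ of all finite ensembles is infinite computably enumerable, and that $A\#$ is recursive on it. Together these let a single program enumerate $\mathbf{X}$ while evaluating $A\#(X)$ at each step.

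Next I would define a program $p_2$ which, in addition to $\mathbf{F}$ and $k$, takes a natural number $i \in \mathbb{N}$ as input, runs the above enumeration, and halts by returning the $i$-th ensemble $X$ for which the strengthened threshold
\begin{equation*}
	|p_2| + |\mathbf{F}| + \mathbf{O}(\log_2(k)) + k + \mathbf{O}(\log_2(i)) + i + \mathbf{O}(1) \leq A\#(X)
\end{equation*}
holds (the analog of Equation~\eqref{eqProofCorollaryInmanyLargeMA}). Taking $\langle k, \mathbf{F}, p_2, i \rangle$ as a self-delimiting description of the returned $X$ and running the inequality chain of Equation~\eqref{eqProofLargeMA3} with $A\#$ in place of $c_\Gamma$ and $\mathbf{K}(X)$ in place of $\mathbf{K}(y)$ then yields $\mathbf{K}(X) + k \leq A\#(X)$ for each $i$. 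Since distinct values of $i$ force $p_2$ to return distinct ensembles (it returns the $i$-th witness), this produces infinitely many $X$ satisfying Equation~\eqref{eqLargeAN1}, which is the claim.

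The one step that is not pure transcription, and which I expect to be the crux, is guaranteeing that $p_2$ halts on every input $i$: the enumeration must contain at least $i$ ensembles meeting the threshold, so the search cannot run forever. This reduces to showing that $A\#$ is unbounded on $\mathbf{X}$, the counterpart of the observation in Corollary~\ref{thmInfmanyLargeMA} that $c_{\Gamma'}$ realises infinitely many distinct values. I would establish it directly from the defining formula $A\#(X) = \sum_{i=1}^{N} e^{a_i}\left(\frac{n_i - 1}{N_T}\right)$ with $a_i = c_\Gamma(x_i)$: because the assembly index is itself unbounded, a legitimate ensemble consisting of two copies of a single object $x$ of arbitrarily large assembly index has $A\#(X) = \tfrac{1}{2}e^{c_\Gamma(x)}$, which exceeds any prescribed bound. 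Hence the threshold is surpassed infinitely often along the enumeration, every $i$-th witness exists, $p_2$ halts, and the construction goes through.
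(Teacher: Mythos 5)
Your proposal is correct and follows essentially the same route as the paper, whose entire proof consists of the remark that the argument is analogous to that of Corollary~\ref{thmInfmanyLargeMA}; you carry out exactly that transposition. You in fact go slightly further than the paper by explicitly verifying the one point the analogy does not supply for free, namely that $A\#$ is unbounded on $\mathbf{X}$ (via the two-copy ensemble giving $A\#(X)=\tfrac{1}{2}e^{c_\Gamma(x)}$), which correctly secures the halting of $p_2$.
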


The reader is invited to note that the latter Lemma~\ref{thmLargeAssemblynumber} and Corollary~\ref{thmInfmanyLargeAN} can also be generalised for any recursive function $ f $ defined on the basis of 
the assembly index and on an arbitrary observable (for which Assembly Theory, in particular, would include a formal criterion to decide whether or not function $ f $ can be calculated from this observable).
For example, it applies to the arbitrary function $ A''\#\left( X \right) $ from Section~\ref{sectionEncodingfromAN}.

%\subsection{Compression algorithms}\label{sectionCompression}

\subsection{Assembly theory is an encoding/compression scheme}\label{sectionEncodingMA}

\subsubsection{The assembly index defines an encoding scheme}\label{sectionEncodingfromtheIndex}

The following results show that the minimum rooted assembly (sub)space, from which assembly index $ c_\Gamma\left( y \right) $ is calculated, constitutes a parameter that can be employed to encode the minimal assembling process of an object $ y $.
%In Lemma~\ref{thmEncodingfromMA}, except for the minimum information necessary to encode the vertices (in the set $ V\left( { \Gamma^* }_{ y }  \right) $) themselves contained in the assembly subspace\footnote{Or, alternatively, shortest path in the acyclic multigraph.} $ { \Gamma^* }_{ y } $ that are the building blocks of the object, it is demonstrated that assembly index always \emph{overestimates} algorithmic complexity when trying to minimize the number of steps to construct an object $ y $.
More formally, Theorem~\ref{thmEncodingfromMA} demonstrates that any arbitrarily chosen encoding of the set of vertices (in any arbitrary ordering) of a minimum rooted assembly space is sufficient to encode the highest-assembly-index object assembled across this space.
In other words, the sheer number of objects in a minimum rooted assembly subspace carries enough information to allow one to always pick the object with the highest assembly index value,
even though the ordering of the objects in this collection may be completely arbitrary in the first place.
This demonstrates that AT's assembly index calculation method is one of the possible methods for encoding objects.

\begin{theorem}\label{thmEncodingfromMA}
	Let $ \mathcal{ S } $ be enumerable by an algorithm.
	Let $ \mathbf{F} $ be an arbitrary formal theory that contains Assembly Theory, including all the decidable procedures of
	%	the chosen statistical method, 
	the chosen method for calculating the assembly index of an object for a nested subspace of $ \mathcal{ S } $,
	and the program that decides whether or not the criteria for building the assembly spaces are met.
	Let $ \Gamma \subset \mathcal{ S } $ with $ y \in V\left( \Gamma \right) $ be arbitrary.
	Then, one has it that $ \left< V\left( { \Gamma^* }_{ y }  \right) \right> $ encodes the object $ y $ such that
	\begin{equation}\label{eq1EncodingfromMA}
			\mathbf{K}\left( y \right)
			\leq
%			\mathbf{K}\left(  
%			c_\Gamma\left( y \right) 
%			\right)
%			+ 
			\mathbf{K}\left(  \left< V\left( { \Gamma^* }_{ y }  \right) \right> \right)
			+ \mathbf{O}(1)
%			\leq
%%			\mathbf{K}\left(  
%			c_\Gamma\left( y \right) 
%%			\right)
%			+ \mathbf{K}\left(  \left< V\left( { \Gamma^* }_{ y }  \right) \right> \right)
%			+ \mathbf{O}(1)
		\end{equation}
	holds, where:
	$ c_\Gamma\left( y \right) $ gives the assembly index of the object $ y $ in the assembly space $ \Gamma $ (or $ \mathcal{ S } $);
	$ { \Gamma^* }_{ y } $ is the minimum rooted assembly subspace of $ \Gamma $ from which the assembly index $ c_\Gamma\left( y \right) $ is calculated;
	and $ \left< V \right> $ is an arbitrarily fixed encoding of a set $ V $ of vertices.
	
	\begin{proof}
		Remember that $ { \Gamma^* }_{ y } $ denotes the minimum rooted assembly subspace of $ \Gamma $ from which the assembly index calculates the augmented cardinality that defines the value $ c_\Gamma\left( y \right) $.
		Notice that since $ { \Gamma^* }_{ y } $ is minimal, then there is a simple algorithm that can always calculate the integer value of $ c_\Gamma\left( y \right) $ given $ \mathbf{K}\left(  \left< V\left( { \Gamma^* }_{ y }  \right) \right> \right) $ as input.
%		Hence, from basic properties of AIT, one has it that $ \mathbf{K}\left(  
%					c_\Gamma\left( y \right) 
%					\right)
%					\leq \mathbf{K}\left(  \left< V\left( { \Gamma^* }_{ y }  \right) \right> \right)
%					+ \mathbf{O}(1) $.
		Remember that, from our definitions and conditions in Section~\ref{sectionDefinitions}, $ \left< V\left( { \Gamma^* }_{ y }  \right) \right> $ denotes an encoding of the set of vertices of $ { \Gamma^* }_{ y } $, an encoding method which is arbitrary, previously chosen, and does not depend on the choice of the assembly (sub)space or object.
		Now, one can exploit the computability of Assembly Theory and construct a program $ p_3 $ that receives $ \mathbf{F} $ 
		%$ c_\Gamma\left( y \right) $, 
		and $ \left< V\left( { \Gamma^* }_{ y }  \right) \right> $ as inputs.
		By using $ \mathbf{F} $ to decide whether or not an arbitrary $ \Gamma' $ satisfies the criteria of being an assembly subspace, program $ p_3 $ enumerates the finite set $ \mathbf{\Gamma}\left( V\left( { \Gamma^* }_{ y }  \right) \right) $ of all possible assembly subspaces whose set of vertices is exactly $ V\left( { \Gamma^* }_{ y }  \right) $.
		Next, using $ \mathbf{F} $ to decide whether or not an arbitrary $ \Gamma' $ satisfies the criteria of being a \emph{minimum} rooted assembly subspace whose longest path ends in the same last vertex of $ \Gamma' $,
		program $ p_3 $ searches in $ \mathbf{\Gamma}\left( V\left( { \Gamma^* }_{ y }  \right) \right) $ for the first assembly subspace $ \Gamma'' $ that is minimal. 
		%and whose augmented cardinality is $ c_\Gamma\left( y \right) $.
		Finally, it returns the object $ y' \in V\left( \Gamma'' \right) $ at which the longest rooted paths in $ \Gamma'' $ end (i.e., $ y' $ is the terminal vertex of these longest rooted paths).
		In order to achieve the desired proof, one has yet to show that the object built at the end of the longest paths in the minimum rooted assembly subspaces is unique and that $ y' = y $.  
		To this end, one can employ the minimality of $ { \Gamma^* }_{ y } $.
		First, suppose there is a minimum $ \Gamma_{2} \in \mathbf{\Gamma}\left( V\left( { \Gamma^* }_{ y }  \right) \right) $ that ends at the object $ z $ such that $ \Gamma_{2} \neq \Gamma'' $ and $ z \neq y' $.
		Then, by construction of the program $ p_3 $, one would have it that $ y' $ belongs to at least one of the rooted paths in $ \Gamma_{2} $.
		Hence, there would be a rooted path that leads to $ y' $ that is shorter than the one that leads to $ z $ because by hypothesis one has it that the path that leads to $ z $ is the longest one; and if there were another terminal vertex $ w $ in $ \Gamma_{2} $ whose rooted path has the same length as that of $ z $, then $ \Gamma_{2} $ would not be a minimum subspace.
		As a consequence, there would be a rooted assembly subspace $ { \Gamma_{2}^* }_{ y' } \subseteq  \Gamma_{2} $ whose augmented cardinality is smaller than $ c_{ \Gamma'' }\left( y' \right) $.
%		This holds because one has it that $ c_{ \Gamma_{2} }\left( z \right) = c_\Gamma\left( y \right) $ by hypothesis and the construction of the decoding procedure to the above.
		However, by construction, we have it that $ c_{ \Gamma'' }\left( y' \right) $ is minimal, which leads to a contradiction.
		Therefore, since by hypothesis we have it that $ { \Gamma^* }_{ y } $ is minimal and $ { \Gamma^* }_{ y } \in  \mathbf{\Gamma}\left( V\left( { \Gamma^* }_{ y }  \right) \right)  $, we conclude that either $ \Gamma_{2} = \Gamma'' = { \Gamma^* }_{ y } $ or $ z = y' = y $, and hence that, in any case, program $ p_3 $ produces $ y $ as output.
		Finally, the proof of Equation~\eqref{eq1EncodingfromMA} follows from basic inequalities in AIT.
	\end{proof}
\end{theorem}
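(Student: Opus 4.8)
The plan is to read the target inequality~\eqref{eq1EncodingfromMA} as a direct instance of the invariance property of prefix complexity: for any \emph{fixed} (partial) computable decoder $ R $, one has $ \mathbf{K}\left( R(w) \right) \leq \mathbf{K}\left( w \right) + \mathbf{O}(1) $, where the additive constant absorbs the constant length of $ R $ together with the constant-size description of the theory $ \mathbf{F} $. Everything therefore reduces to exhibiting a single decoding algorithm $ R $ that, given only the edge-free and arbitrarily ordered encoding $ \left< V\left( { \Gamma^* }_{ y } \right) \right> $ of the vertex set, reconstructs the object $ y $. Taking $ w = \left< V\left( { \Gamma^* }_{ y } \right) \right> $ and $ R(w) = y $, the inequality follows the moment correctness of $ R $ is secured.

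To build $ R $, I would exploit the decidability guarantees bundled into $ \mathbf{F} $. Since the vertex set is finite, only finitely many edge-labellings can be placed on it, so $ R $ can enumerate the finite family $ \mathbf{\Gamma}\left( V\left( { \Gamma^* }_{ y } \right) \right) $ of all assembly subspaces whose vertex set is exactly $ V\left( { \Gamma^* }_{ y } \right) $, using $ \mathbf{F} $ to test the assembly-space criteria on each candidate. Next, $ R $ uses $ \mathbf{F} $ to filter this family down to those subspaces that are \emph{minimum} rooted assembly subspaces relative to the terminal vertex of their longest rooted path, and returns the object $ y' $ sitting at the terminal vertex of a longest rooted path of the first such minimal subspace $ \Gamma'' $ it locates. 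Each step is effective precisely because $ \mathbf{F} $ carries the decision procedures for the assembly-space criteria and for the assembly index, hence for minimality and augmented cardinality; halting is guaranteed by the finiteness of the family.

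The hard part will be correctness, i.e.\ showing $ y' = y $ irrespective of the arbitrary vertex ordering and of which minimal subspace $ R $ happens to pick. This is a combinatorial uniqueness claim: the terminal object of a longest rooted path in a minimum rooted assembly subspace must be pinned down by the vertex set alone, and I would establish it by contradiction using only the minimality defining the augmented cardinality. Suppose some minimal $ \Gamma_2 \in \mathbf{\Gamma}\left( V\left( { \Gamma^* }_{ y } \right) \right) $ terminated at an object $ z \neq y' $. Since $ y' \in V\left( { \Gamma^* }_{ y } \right) = V\left( \Gamma_2 \right) $, the vertex $ y' $ lies on some rooted path of $ \Gamma_2 $, and because the path to $ z $ is a \emph{longest} one the rooted path to $ y' $ is strictly shorter, where a tie between two distinct terminal vertices of maximal length would itself violate minimality of $ \Gamma_2 $. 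But then the rooted subspace $ { \Gamma_2^* }_{ y' } \subseteq \Gamma_2 $ terminating at $ y' $ would have strictly smaller augmented cardinality than $ c_{ \Gamma'' }\left( y' \right) $, contradicting the minimality of $ \Gamma'' $. As $ { \Gamma^* }_{ y } $ is itself a minimum rooted subspace lying in the family and terminating at $ y $, collapsing this contradiction forces $ y' = y $, so $ R $ indeed outputs $ y $.

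With correctness of $ R $ in hand, the bound $ \mathbf{K}\left( y \right) \leq \mathbf{K}\left( \left< V\left( { \Gamma^* }_{ y } \right) \right> \right) + \mathbf{O}(1) $ is immediate from the invariance and subadditivity inequalities of AIT, the constant depending only on $ R $ and $ \mathbf{F} $ and not on $ y $ or $ \Gamma $. I would stress in the writeup that no edge or ordering information is ever transmitted: the force of the statement is precisely that the bare, cardinality-bearing set of vertices, together with the global rules fixed once and for all inside $ \mathbf{F} $, already determines the highest-assembly-index object, which is exactly what certifies the assembly-index construction as an encoding scheme.
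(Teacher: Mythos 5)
Your proposal is correct and follows essentially the same route as the paper's own proof: your decoder $R$ is precisely the paper's program $p_3$ (enumerate the finite family $\mathbf{\Gamma}\left( V\left( { \Gamma^* }_{ y } \right) \right)$ via $\mathbf{F}$, pick the first minimal subspace, output the terminal vertex of its longest rooted path), and your correctness argument is the same contradiction based on the minimality of the augmented cardinality. The only difference is cosmetic — you make explicit the invariance inequality $\mathbf{K}\left( R(w) \right) \leq \mathbf{K}\left( w \right) + \mathbf{O}(1)$ that the paper leaves implicit under ``basic inequalities in AIT.''
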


Notice that $ \left< V\left( { \Gamma^* }_{ y }  \right) \right> $ is only built to encode the membership relation for the set, and does not need to carry any a priori information about its internal structure that results in the assembly space from which it only takes the set of vertices.
Only by applying a third-party decoding algorithm, such as program $ p_3 $ in the proof of Theorem~\ref{thmEncodingfromMA}, does the final object $ y $ become uniquely retrievable. 
%In fact, a straightforward analogous proof to the one in Theorem~\ref{thmEncodingfromMA} can be applied to demonstrate that not only is the final object $ y $ encodable by $ \left< V\left( { \Gamma^* }_{ y }  \right) \right> $, but also the set of shortest (or the longest) \emph{paths} in the respective minimum rooted assembly subspace.
%A sequence of edges in each of these shortest (or longest) paths constitutes a way to parse the object (e.g., a string), i.e., to decompose it into contiguous subsequences.
%Each distinct assembling path in $ { \Gamma^* }_{ y } $ correspond to a distinct way to parse the object $ y $.
%This idea will become clearer in the next Sections~\ref{sectionLZGrammar} and~\ref{sectionLZcompressionassemblyspaces}.
%It also constitutes the compression scheme underlying LZ algorithms.

One can demonstrate that from encoding one\footnote{ There might be more than one of these paths. Corollary~\ref{thmEncodingfromlongestpaths} holds in any case, i.e., it holds for any arbitrarily chosen $ \gamma_y^{ \max } $ in $ { \Gamma^* }_{ y } $.} of the longest\footnote{ Notice that a directly analogous version of Corollary~\ref{thmEncodingfromlongestpaths} holds for $ \gamma_y^{ \min } $ instead of $ \gamma_y^{ \max } $. From encoding any of the shortest paths in a minimum rooted assembly subspace, one can always directly retrieve a minimum rooted assembly subspace, and hence the terminal vertex that corresponds to the object to be assembled.} paths in a minimum rooted assembly subspace, one can always retrieve a minimum rooted assembly subspace, and hence the terminal vertex that corresponds to the object to be assembled:

\begin{corollary}\label{thmEncodingfromlongestpaths}
	Let $ \mathcal{ S } $ be enumerable by an algorithm.
	Let $ \mathbf{F} $ be an arbitrary formal theory that contains Assembly Theory, including all the decidable procedures of
	%	the chosen statistical method, 
	the chosen method for calculating the assembly index of an object for a nested subspace of $ \mathcal{ S } $,
	and the program that decides whether or not the criteria for building the assembly spaces are met.
	Let $ \Gamma \subset \mathcal{ S } $ with $ y \in V\left( \Gamma \right) $ and $ \gamma_y^{ \max } $ be arbitrary, where $ \gamma_y^{ \max } $ is one of the longest rooted paths in $ { \Gamma^* }_{ y } $ that has $ y $ as its terminal vertex and $ { \Gamma^* }_{ y } $ is the minimum rooted assembly subspace of $ \Gamma $ from which the assembly index $ c_\Gamma\left( y \right) $ is calculated.
	Then, one has it that $ \left< \gamma_y^{ \max } \right> $ encodes the object $ y $ such that
	\begin{equation}\label{eq1thmEncodingfromlongestpaths}
		\mathbf{K}\left( y \right)
		\leq
		\mathbf{K}\left(  \left< \gamma_y^{ \max } \right> \right)
		+ \mathbf{O}(1)
	\end{equation}
	holds, where $ \left< \gamma \right> $ is any arbitrarily fixed encoding of a sequence $ \gamma $ composed of contiguous oriented edges together with the starting vertex/object in the basis set $ B_\Gamma \subseteq B_{ \mathcal{ S } } $, the sequence which defines the path $ \gamma $.
	
	\begin{proof}
		Let $ p_3 $ be the program used in the proof of Theorem~\ref{thmEncodingfromMA}.
		Let $ p_4 $ be a program that receives $ \mathbf{F} $, program $ p_3 $, and $ \left< \gamma_y^{ \max } \right> $ as inputs.
		Then, by using $ \mathbf{F} $ to decide whether or not an arbitrary $ \Gamma' $ satisfies the criteria of being a minimum rooted assembly subspace, it
        enumerates all possible minimum rooted assembly subspaces that have $ \gamma_y^{ \max } $ as one of its longest rooted paths.
		In the next step, program $ p_4 $ picks the first of these possible minimum rooted assembly subspaces and encodes it in the form $ \left< V\left( { \Gamma^* }_{ y }  \right) \right> $.
		Finally, program $ p_4 $ returns the output of program $ p_3 $ with $ \mathbf{F} $ and $ \left< V\left( { \Gamma^* }_{ y }  \right) \right> $ as inputs.
		The proof of Equation~\eqref{eq1thmEncodingfromlongestpaths} follows from Equation~\eqref{eq1EncodingfromMA} and basic inequalities in AIT.
	\end{proof}
\end{corollary}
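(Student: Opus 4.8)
The plan is to reduce the claim directly to Theorem~\ref{thmEncodingfromMA} by exhibiting a single fixed-size algorithm that transforms the encoding $ \left< \gamma_y^{ \max } \right> $ into an encoding $ \left< V\left( { \Gamma^* }_{ y } \right) \right> $ of the vertex set of a minimum rooted assembly subspace, after which the earlier theorem recovers $ y $. First I would build a program $ p_4 $ that receives $ \mathbf{F} $, the program $ p_3 $ from the proof of Theorem~\ref{thmEncodingfromMA}, and $ \left< \gamma_y^{ \max } \right> $ as inputs. Exploiting the decidability (guaranteed by $ \mathbf{F} $) of the predicate ``$ \Gamma' $ is a minimum rooted assembly subspace having $ \gamma_y^{ \max } $ as one of its longest rooted paths,'' program $ p_4 $ enumerates $ \mathcal{ S } $ and halts as soon as it meets a subspace satisfying this predicate. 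This search is guaranteed to terminate because the original $ { \Gamma^* }_{ y } $ is itself such a subspace, so at least one witness exists in the computably enumerable family $ \mathcal{ S } $. The program then takes the first witness $ \Gamma'' $, reads off its vertex set, and re-encodes it in the canonical form $ \left< V\left( { \Gamma^* }_{ y } \right) \right> $.

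Next I would feed $ \left< V\left( { \Gamma^* }_{ y } \right) \right> $ together with $ \mathbf{F} $ into $ p_3 $ and return its output. The crux is to certify that this output is exactly $ y $, even though $ p_4 $ may have reconstructed a witness $ \Gamma'' $ distinct from the original $ { \Gamma^* }_{ y } $ and possibly carrying a different vertex set. Here I would invoke the uniqueness established inside the proof of Theorem~\ref{thmEncodingfromMA}: the object terminating the longest rooted paths of a minimum rooted assembly subspace is determined by its vertex set alone. Since by hypothesis $ \gamma_y^{ \max } $ is a longest rooted path of the minimal $ \Gamma'' $ and terminates at $ y $, the unique longest-path terminal of $ \Gamma'' $ is $ y $; consequently $ p_3 $, applied to the encoding of any minimum subspace sharing the vertex set of $ \Gamma'' $, must likewise output $ y $. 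This is the step I expect to demand the most care, as it is precisely where the reduction would fail were the longest-path terminal allowed to depend on the edge structure rather than on the vertex set; the minimality argument already carried out for Theorem~\ref{thmEncodingfromMA} is what rules this out, which is why routing through $ p_3 $ rather than naively tracing the path is the clean route.

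Finally, the complexity bound follows by chaining. Because $ \mathbf{F} $ and $ p_3 $ are of fixed size, the composite program $ p_4 $ computes $ \left< V\left( { \Gamma^* }_{ y } \right) \right> $ from $ \left< \gamma_y^{ \max } \right> $ up to an additive constant, so that
\[
\mathbf{K}\left( \left< V\left( { \Gamma^* }_{ y } \right) \right> \right)
\leq
\mathbf{K}\left( \left< \gamma_y^{ \max } \right> \right) + \mathbf{O}(1) .
\]
Combining this inequality with Equation~\eqref{eq1EncodingfromMA} of Theorem~\ref{thmEncodingfromMA} gives $ \mathbf{K}\left( y \right) \leq \mathbf{K}\left( \left< \gamma_y^{ \max } \right> \right) + \mathbf{O}(1) $, which is Equation~\eqref{eq1thmEncodingfromlongestpaths}. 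The same construction applies verbatim to any choice among the longest rooted paths, and, on replacing ``longest'' by ``shortest'' throughout, to $ \gamma_y^{ \min } $ as well, since a shortest rooted path equally determines a minimum rooted assembly subspace and hence its terminal object.
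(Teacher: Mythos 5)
Your proposal is correct and follows essentially the same route as the paper's own proof: construct a program $p_4$ that, given $\mathbf{F}$, $p_3$, and $\left< \gamma_y^{ \max } \right>$, enumerates the minimum rooted assembly subspaces admitting $\gamma_y^{ \max }$ as a longest rooted path, takes the first witness, re-encodes its vertex set, and hands it to $p_3$, then chains the resulting constant-overhead reduction with Equation~\eqref{eq1EncodingfromMA}. Your added discussion of why the recovered witness still forces output $y$ (via the uniqueness argument inside Theorem~\ref{thmEncodingfromMA}) only makes explicit what the paper leaves implicit.
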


A sequence of edges in each of these longest (or shortest) paths constitutes a way to parse the object (e.g., a string), i.e., to decompose it into contiguous subsequences.
Each distinct assembling path in $ { \Gamma^* }_{ y } $ corresponds to a distinct way to parse the object $ y $.
This idea will become clearer in the next Sections~\ref{sectionLZGrammar} and~\ref{sectionLZcompressionassemblyspaces}.
It also constitutes the compression scheme underlying LZ algorithms.

\subsubsection {The assembly index defines a grammar of compression that is bounded by LZ compression and can at most approximate the Shannon Entropy rate}
\label{sectionLZGrammar}

%In the following Section~\ref{sectionCompressionalgorithms}, 
From here on, we investigate how well the assembly index can also produce an encoding whose size should be as small as possible in comparison to the size of the original object.
In other words, we investigate how the assembly index performs as a \emph{compression method}.
Indeed, we show in the following sections that Assembly Theory is not only an encoding but also a \emph{compression scheme}, in particular one that decomposes (i.e., parses) the object into contiguous building blocks, as is done by other methods like LZ compression algorithms and compressing grammars.

As presented in Section~\ref{sectionDefinitions}, recall that the \emph{assembly index} $  c_\Gamma\left( x \right) $ of an object $x$ is defined as the size of the smallest acyclic quiver (directed multigraph) with edge-labellings from  
its own set of vertices\footnote{ And also by adding the property described in Item~$6$ in \cite[Definition 11]{Marshall2019} of assembly spaces. }, a quiver which is called the \emph{assembly (sub)space}, denoted by $ \left( \Gamma , \phi \right) $, or just $ \Gamma $ where the edge-labellings given by $ \phi $ are not relevant.
The ``tree''-like structure of the assembly space maps the generation of an object from a set of basis objects in $ B_\mathcal{ S } $, which is a finite set and remains constant for any object assembled from combinations of the basis objects. 
Each vertex represents an object, and the directed edge relation specifies the \textit{generation hierarchy} (i.e., the label of the oriented edge) of the object  to be combined with the previous object (i.e., the source vertex to which the respective outgoing labelled edges are connected) in order to assemble a new object (i.e., the target vertex to which the respective incoming labelled edges are connected).

Now, given that these minimum rooted assembly subspaces are finite and acyclic, the number of objects that they can represent by an assembling process in a given assembly space is finite. 
Therefore, as demonstrated in the following Theorem~\ref{main}, there will be a \emph{Context-Free Grammar} (CFG) $ G = \left( V_G, \Sigma_G, R, S  \right) $ that generates each of these objects by combining terminal symbols from an alphabet.
These terminal symbols correspond to the basis objects,
where $ V_G $ is the alphabet of $ G $, $ \Sigma_G \subset V_G $ is the set of terminal symbols, and $ R \subseteq  \left( V_G \setminus \Sigma_G \right) \times V_G^* $ is the set of derivation rules, and $ S \subset \left( V_G \setminus \Sigma_G \right) $ is the start symbol.

Grammar-based compression algorithms have been an active area of research~\cite{kieffer_grammar-based_2000,Lehman2002ApproximationAF,10.1007/978-3-540-27836-8_5,ganczorz2018entropy}. The problem of finding the smallest grammar has been studied for several decades and is known to be NP-Complete~\cite{charikar_smallest_2005,Lehman2002ApproximationAF}. Furthermore, the size of these compressing grammars is lower bounded by LZ encoding~\cite{10.1007/978-3-540-27836-8_5} (disregarding time complexity), which means that they converge to LZ in the optimal case for ergodic and stationary stochastic processes, and therefore are bounded by the noiseless compression limit~\cite{Cover2005}.

It is straightforward to show the existence of such a CFG that encodes all the objects generated by an assembly space. 
In the following, we will investigate it further to demonstrate that for any ${ \Gamma^* }_{ x }$ (i.e., a minimal rooted assembly subspace) that generates a string $x$, there exists a grammar $ G $ that \textit{encodes} this ${ \Gamma^* }_{ x }$, encoding not only the final object $ x $ to be assembled, but the whole assembling process rooted in the basis objects. 
In other words, there is at least one CFG that generates $x$ and manifests a correspondence between: the basis objects of both the grammar (terminal symbols) and the assembly subspace; non-basis objects and vertices of ${ \Gamma^* }_{ y }$ with the sets of non-terminal symbols; and edges of ${ \Gamma^* }_{ y }$ with derivation rules of the grammar.\\

\begin{theorem}\label{main}
    Let 
%$\Gamma = \langle V_\Gamma, E_\Gamma, s_\gamma, t_\gamma \rangle$
$ { \Gamma^* }_{ x } $
be an arbitrary minimum rooted assembly (sub)space from which the assembly index $c_\Gamma(x)$ is calculated. 
Then, there exists a CFG  $G = \left( V_G, \Sigma_G, R, S  \right)$ that generates $ x $ such that 
$ V\left( { \Gamma^* }_{ x } \right) \setminus  \left\{ x \right\}  = V_G \setminus \Sigma_G $, 
$ \Sigma_G = B_\Gamma \cup \left\{ x \right\} $, 
and 
\begin{equation}
	\left| E\left( { \Gamma^* }_{ x } \right) \right| + \left| B_\Gamma \right| = | R |
\end{equation}
hold, %\footnote{Here we are considering that rules of the form $\alpha = \beta_1 | \beta_2 $ are a single rule, as smallest grammar problem defines size of the grammar as the number of symbols in the right size of the rules.}
where 
$ B_\Gamma $ is the basis set of the assembly space $ \left( \Gamma , \phi \right) $ from which $ x $ is one of its vertices/objects such that $ { \Gamma^* }_{ x } \subseteq \Gamma $. 
%and
%$\Sigma_G \subset V\left( { \Gamma^* }_{ x } \right) $.

\begin{proof}
Let $ { \Gamma^* }_{ x } $ generate $x$. First, we define the set of terminal symbols in $G$ as the set of basis objects in $ { \Gamma^* }_{ x } $ plus the final object $ x $ to be assembled, that is, $ \Sigma_G = B_\Gamma \cup \left\{ x \right\} $. 
Then, let $ V_G \setminus \Sigma_G = V\left( { \Gamma^* }_{ x } \right) \setminus  \left\{ x \right\}  $
be the set of non-terminal symbols.  
Now, we construct the set $R$ of derivation rules of the CFG in the following way: 
we start by adding $S \rightarrow x $ for each basis object $ x $ that belongs to $ V_G \setminus \Sigma_G $ and also for each basis object that belongs to $ \Sigma_G $ to the set of derivation rules\footnote{ Thus, there will be $ 2 \left| B_\Gamma \right| $ rules in the form $S \rightarrow x $.};
then, for each $e_i \in E\left( { \Gamma^* }_{ x } \right)  $ such that $ s_\Gamma(e_i) = x_i$ and $ t_\Gamma(e_i) = y_i $, we construct a corresponding new rule, either $ x_i \rightarrow  x_i z_i $ or $ x_i \rightarrow  z_i x_i  $, where $ z_i \in V\left( { \Gamma^* }_{ x } \right) $ and $ \phi\left( e_i \right) = z_i $, whichever concatenation of combinations builds the object/vertex $ y_i $ (i.e., either $ y_i = x_i z_i $ or $ y_i = z_i x_i $). 
This is possible since each ``assembly step'' (represented by the respective edge) can only combine two objects: one object is the source vertex of the edge; and the other object is the vertex that is the label of this edge.
If $ y_i = x_i z_i \neq x $ (or $ y_i = z_i x_i \neq x $), then we add $ y_i $ to the set of non-terminal symbols $ V_G \setminus \Sigma_G $.
Otherwise, we add $ y_i = x $ to the set of terminal symbols of $ G $.
From this construction of the set $ R $, one will obtain that
\begin{equation}
	\left| E\left( { \Gamma^* }_{ x } \right) \right| + \left| B_\Gamma \right| = | R |
	\text{ .}
\end{equation}
%If the combination was unrestricted, then the assembly index of strings would always be one, as you can combine all the characters in the first step. 
%For any other limit we can use a rule with more symbols. 
%By construction, there's equality between the sets hold. We only need to prove that $G$ accepts $X$. 
The rest of the proof that $G$ generates $x$ follows by induction over the number of vertices in $ V\left( { \Gamma^* }_{ x } \right) $. 
If $ \left| V\left( { \Gamma^* }_{ x } \right) \right| = 1 $, then $x$ must be a basis object in $ B_\Gamma $. Hence, $ x $ is a terminal symbol in $G$, and thereby it directly follows that $S \rightarrow x$ generates $x$. 
Now, we assume that $ G $ generates any non-terminal symbol $ x_i $ for $ i \leq n $ with a set $ R_n $ of derivation rules, where each vertex $ x_i $ belongs to a rooted path (in $ { \Gamma^* }_{ x } $) whose terminal vertex is $ x $. 
Let $ V\left( { \Gamma^* }_{ x } \right) = \left\{ x_1, \dots, x_n, x_{ n + 1 } , \dots , x \right\} $ be the set of vertices so that  ${ \Gamma^* }_{ x }$ generates $x$. 
By the inductive hypothesis, since there exists an edge $ e $  with $ s_\Gamma\left( e \right) = x_i $ and $ t_\Gamma\left( e \right) = x_j $ for some $ x_i $ and $ x_j $ with $ i,j \leq n $, we have it that there is the rule  $ x_i \rightarrow  x_i z_i $ or $ x_i \rightarrow  z_i x_i  $, where
either $ x_j = x_i z_i $ or $ x_j = z_i x_i $, respectively.
%$x \rightarrow x_iz_i$ or $X \rightarrow z_ix_i$ such that $x$ is either concatenation. 
By construction of $ { \Gamma^* }_{ x } $, we have it that there is a vertex $ x'_i \in \left\{ x_1, \dots, x_n \right\} $ that is the source vertex of an edge $ e' \in E\left( { \Gamma^* }_{ x } \right) $ whose label is another $ x'_j \in \left\{ x_1, \dots, x_n \right\} $ and the target vertex of $ e' $ is $ t_\Gamma\left( e' \right) = x_{ n + 1 } $.
Then, we add the rule $ x'_i \rightarrow  x'_i x'_j $ or $ x'_i \rightarrow  x'_j x'_i  $ to $ R_n $, where 
$ x_{ n + 1 } = x'_i x'_j $ or $ x_{ n + 1 } = x'_j x'_i $, respectively. %and $ x \in \Sigma_G $ is a terminal symbol of the $ G $.
Therefore, since by the inductive hypothesis both $ x'_i $ and $ x'_j $ can be generated by the set $ R_n $ of rules, we will have it that $ R_n \cup \left\{ x'_i \rightarrow  x'_i x'_j \lor x'_i \rightarrow  x'_j x'_i \right\} $ generates the object $ x_{ n + 1 } $.
Finally, by the above construction of the set $ R $ of derivation rules, we will have it that if $ x_{ n + 1 } = x $, then $ x_{ n + 1 } $ will be a terminal symbol and $ G $ stops exactly at the object $ x $.
%By hypothesis, the subgrammars defined by removing $x$ and adding the rule $S\rightarrow z_i$  and  $S\rightarrow x_i$ must generate $z_i$ and $x_i$. Therefore we have the rules to generate $z_i$ and $x_i$; thus with the rule  $x \rightarrow x_iz_i$ or  $x \rightarrow z_ix_i$ we can generate $x$. Then, since $S\rightarrow X$, ${ \Gamma^* }_{ x }$ generates $x$.
%Finally, we remove the rule $S\rightarrow x$ and replace $x$ by $S$, which is equivalent to just defining the starting symbol as being $x$. By construction, the stated equities between the sets hold.

\end{proof}
    
\end{theorem}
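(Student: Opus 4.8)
The plan is to read off all the data of the grammar $G$ directly from the combinatorial structure of the finite acyclic quiver $ { \Gamma^* }_{ x } $, so that the two set-identities and the counting identity hold by construction, and then to verify that $G$ actually generates $x$ by an induction that follows a topological order of the underlying DAG. First I would fix the symbol sets exactly as the statement demands: the terminal alphabet $\Sigma_G = B_\Gamma \cup \left\{ x \right\}$ and the non-terminals $V_G \setminus \Sigma_G = V\left( { \Gamma^* }_{ x } \right) \setminus \left\{ x \right\}$. The one point requiring care already here is that a basis object must play two roles at once---as a terminal letter of the string ultimately spelled out, and as a vertex of $ { \Gamma^* }_{ x } $ that may be the source of further assembly steps---so the construction implicitly splits each basis object into a terminal letter and a distinct vertex-symbol linked by a dedicated rule. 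This is precisely the standard device used in grammar-based compression, and it is what accounts for the extra $\left| B_\Gamma \right|$ rules in the count.

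The key structural observation driving the rule construction is that every edge of a (minimum) assembly subspace encodes a single binary assembly step: its source $u = s_\Gamma\left( e \right)$ and its label $\phi\left( e \right) = w$ combine into its target $v = t_\Gamma\left( e \right)$, with either $v = u w$ or $v = w u$ as a concatenation of the underlying objects. I would therefore associate to each edge $e$ exactly one production recording this binary combination (namely $v \to u\,w$ or $v \to w\,u$ according to the concatenation order), and to each basis object exactly one base rule emitting its terminal letter. Because this yields a bijection between edges and combination rules and between basis objects and base rules, the identity $\left| E\left( { \Gamma^* }_{ x } \right) \right| + \left| B_\Gamma \right| = \left| R \right|$ follows immediately, and no separate argument is needed for it.

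For correctness---that $G$ generates $x$---I would argue by strong induction along a topological order $x_1, \dots, x_m = x$ of the vertices, which exists because $ { \Gamma^* }_{ x } $ is finite, rooted, and acyclic, so every edge runs from a smaller to a larger index. The induction hypothesis is that the symbol attached to each vertex $x_i$ derives, over the terminal alphabet, exactly the string $x_i$. The base case is a basis object, which emits itself through its base rule. For the inductive step, the incoming assembly step to $x_{ n + 1 }$ expresses it as $x_{ n + 1 } = x'_i\, x'_j$ (or $x'_j\, x'_i$) for vertices $x'_i, x'_j$ of strictly smaller index; applying the corresponding combination rule and then the inductive hypothesis to each factor produces a derivation of $x_{ n + 1 }$. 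Closing the start rule $S$ onto the vertex $x$ at the top then gives $S \Rightarrow^* x$.

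The part I expect to be the genuine obstacle is not the induction, which is routine, but reconciling the prescribed symbol sets with a well-formed grammar. The statement simultaneously places every basis object in $\Sigma_G$ (as a terminal) and, through $V_G \setminus \Sigma_G = V\left( { \Gamma^* }_{ x } \right) \setminus \left\{ x \right\}$, lists every non-$x$ vertex---hence every basis vertex---among the non-terminals; since a single symbol cannot be both terminal and non-terminal, the dual-role splitting described above must be carried out explicitly and kept consistent. Symmetrically, $x$ is the unique vertex declared to be a terminal even though it is produced by an assembly step, so the rule that builds $x$ must be set up to halt the derivation exactly at $x$ rather than feed it back into further productions. Making these dual roles coherent---so that the two set-identities, the rule count $\left| E \right| + \left| B_\Gamma \right| = \left| R \right|$, and termination of the derivation at the string $x$ all hold together---is the crux of the proof; once the bookkeeping is pinned down, generative correctness reduces to the topological induction above.
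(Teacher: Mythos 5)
Your proposal is correct and follows essentially the same strategy as the paper: read the grammar off the quiver with one production per edge plus base rules for $B_\Gamma$, then verify generation of $x$ by induction along the acyclic structure. The one substantive difference is the orientation of the edge rules. You attach to each edge $e$ the production $v \to u\,w$ on the \emph{target} vertex (the standard straight-line-program form), whereas the paper writes $x_i \to x_i z_i$ on the \emph{source} vertex, a self-referential rule that extends the current sentential form along an assembly path. Your orientation is cleaner in two respects: it makes the rule count $\left| E\left( { \Gamma^* }_{ x } \right) \right| + \left| B_\Gamma \right| = \left| R \right|$ come out exactly (the paper's own footnote asserts $2\left| B_\Gamma \right|$ start rules, which would give $\left| E \right| + 2\left| B_\Gamma \right|$ and conflicts with its stated identity), and it forces you to confront explicitly the terminal/non-terminal dual role of basis objects and of $x$, which the paper's construction leaves implicit. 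The price is that with target-rewriting rules the symbol $x$ cannot sit on the left-hand side of the production that builds it, so your "halt exactly at $x$" bookkeeping is genuinely needed; the paper sidesteps this by never putting $x$ on a left-hand side at all. Both versions establish the claim as stated ($x \in L(G)$ together with the two set identities and the count), so your route is a valid and arguably tighter rendering of the same argument.
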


The grammar defined in Theorem \ref{main} demonstrates that any assembled object can be encoded into a Context-Free Grammar (CFG) in Chomsky Normal Form (CNF) that captures the same information regarding how an object is constructed from other objects. 
The assembly index is defined as the number of non-basis vertices, while the \emph{size} of the grammar is defined as the number of derivation rules, which is equivalent to the number of nonterminals if there are no useless rules~\cite{10.1007/978-3-540-27836-8_5}. Let's denote this size by \(|G|\). 
%Cronin et al. define the assembly space as a quiver. 
In order to encode the object $ x $ assembled according to $ { \Gamma^* }_{ x } $, there is no necessity for having two or more edges linking the same pair of vertices in an assembly space.
Thus, for completeness' sake, we will denote by \(G^*\) the grammar with any useless rules/edges removed. 
We will show in Corollary~\ref{thmMinimumgrammar} that there is a simple linear relationship between the assembly index \(c_{\Gamma}(x)\) and the size of the grammar \(|G^*|\), except for the size of the basis set $ B_\Gamma $ which remains constant for any object assembled upon combinations of basis objects.

\begin{corollary}\label{thmMinimumgrammar}
Let $x$ be a string/object assembled according to $ { \Gamma^* }_{ x } $ and let $G$ be the CFG in Theorem~\ref{main}. Then, 
\begin{itemize}
    \item the assembly index is equivalent to the non-basis symbols in the minimum grammar that generates $ x $, i.e.,
	\begin{equation}
		c_{\Gamma}(x) = |G^*| - \left| B_\Gamma \right|
		\text{ ;}
	\end{equation}

    \item  the number of factors in the LZ encoded form of $x$, denoted by $\text{LZ}(x)$, is upper bounded by the assembly index (except for the basis set $ B_\Gamma $ that is finite and constant), i.e.,
	\begin{equation}
		LZ(x) \leq c_\Gamma(x) + \left| B_\Gamma \right|
		\text{ .}
	\end{equation}
\end{itemize}
\begin{proof}
    The first equation is by construction of $G$ in Theorem~\ref{main}. The second is a direct consequence of the first and \cite[Theorem $1$]{10.1007/978-3-540-27836-8_5}.
\end{proof}
\end{corollary}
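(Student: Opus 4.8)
The plan is to read both identities off the explicit vertex-to-rule correspondence already built in Theorem~\ref{main}: the first equation is a pure counting statement about that correspondence, and the second is obtained by feeding the resulting grammar into the grammar-size lower bound of~\cite[Theorem~$1$]{10.1007/978-3-540-27836-8_5}. So I would not construct anything new; I would only recount what Theorem~\ref{main} produces and then cite one external inequality.

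First I would recall that, by the augmented-cardinality definition of the assembly index in Section~\ref{sectionDefinitions}, $c_\Gamma(x) = \left| V\left( { \Gamma^* }_{ x } \right) \right| - \left| B_\Gamma \right|$. Next I would pass from the grammar $G$ of Theorem~\ref{main} to $G^*$ by deleting the useless rules, namely the productions coming from redundant parallel edges joining an already-connected pair of vertices; since encoding $x$ never needs more than one edge per pair, this deletion leaves a grammar that still generates $x$ and contains no useless rule, so that its size equals its number of non-terminals. The decisive counting step is then to argue that in $G^*$ each object of the minimal subspace contributes exactly one surviving production, giving $|G^*| = \left| V\left( { \Gamma^* }_{ x } \right) \right|$. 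Substituting this into the assembly-index identity yields $c_\Gamma(x) = |G^*| - \left| B_\Gamma \right|$, the first displayed equation.

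For the second equation I would invoke~\cite[Theorem~$1$]{10.1007/978-3-540-27836-8_5}, which states that the number $\mathrm{LZ}(x)$ of factors in the LZ factorization of $x$ lower-bounds the size of the smallest context-free grammar generating $x$; since that smallest size is in turn at most $|G^*|$, we get $\mathrm{LZ}(x) \leq |G^*|$. Combining this with the first equation $|G^*| = c_\Gamma(x) + \left| B_\Gamma \right|$ gives $\mathrm{LZ}(x) \leq c_\Gamma(x) + \left| B_\Gamma \right|$, exactly as claimed. This step is essentially mechanical once the first equation is in hand.

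The main obstacle I anticipate is the bookkeeping behind $|G^*| = \left| V\left( { \Gamma^* }_{ x } \right) \right|$, where an off-by-one is easy to incur. One must use the minimality of $ { \Gamma^* }_{ x } $ to guarantee that each non-basis vertex is generated by a single non-redundant production, so that removing parallel edges neither severs the derivation of $x$ nor leaves two rules for the same object; and one must correctly absorb the constant $\left| B_\Gamma \right|$ arising from the basis productions $S \to x$ alongside the special status of the target $x$, which the construction of Theorem~\ref{main} treats at once as the assembled object and as a terminal symbol of $G$. Reconciling the ``number of rules equals number of non-terminals'' convention for $G^*$ with the vertex count is the only genuinely delicate point; once that offset is pinned down, both equations follow at once.
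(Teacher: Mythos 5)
Your proposal is correct and follows essentially the same route as the paper, whose own proof is just the one-line observation that the first identity holds ``by construction'' of $G$ in Theorem~\ref{main} and that the second follows from it together with \cite[Theorem~$1$]{10.1007/978-3-540-27836-8_5}. You simply make explicit the vertex--rule bijection giving $|G^*| = \left| V\left( { \Gamma^* }_{ x } \right) \right|$ (one surviving production per non-basis vertex plus one start rule per basis object), which is exactly the bookkeeping the paper leaves implicit, and you correctly flag and resolve the only delicate point, namely the offset between the rule count and the augmented cardinality.
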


Corollary~\ref{thmMinimumgrammar} proves that the assembly index for strings is equivalent to the compression size of a compressing grammar, demonstrating that the notion of the minimum ``amount of assembling steps'' (that the minimum rooted assembly subspace defines) is exactly the same as a (minimal) CFG.
For each of the grammars constructed in Theorem~\ref{main}, the final object $ x $ is unique, and for each distinct assembling path leading to $ x $, there is a (unique) corresponding sequence of rules in $ R $ that also results in the object $ x $.

In the optimal case, such a size can only converge towards a traditional LZ compression scheme~\cite{lempel1976complexity, 1055934, welch1984technique}, such as LZ$77$ and LZ$78$,  and thus its compression rate capabilities are bounded by Shannon Entropy by the (noiseless) Source Coding Theorem \cite{noiselesscoding,Cover2005}. It is important to note that the strict inequality arises when the grammar, thus the pathway assembly, is not optimal at identifying the ``building blocks'' of $x$ given the same alphabet of basis objects.  For compressing grammars in particular, several implementation-specific entropy bounds were found in \cite{ganczorz2018entropy}. 

%\begin{theorem}\label{lzw}
%LZW is optimal at counting identical repetitions as it approximates Shannon entropy rate in the limit\cite{Cover2005}.
%\end{theorem}

Now, for higher-dimensional objects, if the construction is recursive, it is possible to find a one-dimensional rule that encodes the object's assembly. For example, for a two-dimensional object, we can construct a grid of fixed size around the object and incorporate the positional information into the grammar's rules while maintaining a linear relationship between the assembly index and the size of the grammar. The rule \(X \rightarrow ApB\) can define how object \(B\) is added to object \(A\) at position \(p\), where \(p\) corresponds to a square in the grid, to generate object \(X\).

\subsubsection{The assembly index is a version of LZ compression}\label{sectionLZcompressionassemblyspaces}

In Section~\ref{sectionEncodingMA}, we have demonstrated that either the set of vertices of the minimum rooted assembly subspace $ { \Gamma^* }_{ y } $ or \emph{at least one} of the longest (or shortest) rooted paths in this $ { \Gamma^* }_{ y } $ is sufficient to encode the object $ y $.
These results employ  general algorithmic methods so that they are independent of the choice of programming language and encoding-decoding schemes.
In Theorem~\ref{main}, we have demonstrated that every minimum rooted assembly (sub)space corresponds to \emph{at least one} context-free grammar (CFG), which reveals a way one can compress an assembly space (e.g., the minimum rooted assembly subspace $ { \Gamma^* }_{ y } $ that generates the object $ y $) into the minimum set of rules of a CFG.
These results together indicate the existence of some LZ-based method to encode both the object and its assembling process.
Indeed, the proof of the following Theorem~\ref{thmLZcompressionfromassemblyspaces} constructs such a LZ compression scheme.

For traditional LZ schemes, the encoded/compressed form of a string comprises a sequence of encoded tuples (or factors) of finite length.
Typically, each one of these tuples contains at least one pointer (e.g., in the case of LZ$78$ and LZW, called indexes) that refers to another tuple or to an entry in a dictionary.
Also, it can contain information about some elements of the basic alphabet (e.g., in the case of LZ$78$ and LZW it contains the codeword for only one symbol to be concatenated at the end of the element the pointer is referring to).
It is trivial to find examples of objects whose parsing according to the LZ$78$ differs from the parsing according to the LZ$77$, or that the one according to the LZMA differs from LZW, and so on.
The LZ family tree of compression schemes includes many distinct schemes, including LZ77, LZ78, LZW, LZSS, and LZMA~\cite{Salomon2010HandbookDataCompression}.
All of these compression methods are defined by schemes that resort to a (static or dynamic) dictionary containing tokens, indexes, pointers, or basic symbols from which the decoder or decompressor can retrieve the original raw data from the received encoded/compressed form according to the respective LZ scheme~\cite{Salomon2010HandbookDataCompression}.
Theorem~\ref{thmLZcompressionfromassemblyspaces} demonstrates that the minimum rooted assembly (sub)space from which the assembly index is calculated is a compression scheme that belongs to this LZ family of compression algorithms.

On the one hand, the pointers in traditional LZ compression schemes, such as LZ77, LZ78 or LZW, refer to encoded tuples that have occurred before as recorded by the associated dictionary.
On the other hand, for Assembly Theory, the pointers refer to other tuples that might not have occurred before in an assembling path, but are part of the minimum rooted assembly subspace $ { \Gamma^* }_{ x } $, that is, part of a ``tree''-like structure that needs to be encoded in the dictionary, which is only built or queried during the encoding/decoding process.
%The sequence of edges that constitutes an assembling path (from the basis set $ B_\Gamma $ to the terminal object/vertex in a minimum rooted assembly subspace $  { \Gamma^* }_{ y } $) can be converted into a sequence of respective encoded tuples that contains pointers, codewords for other objects, etc.
%One can think of these assembling paths being, for example, the sequence defined by either one of the longest rooted paths $ \gamma_x^{ \max } $ or one of the shortest rooted paths $ \gamma_x^{ \min } $ in the minimum rooted assembly subspace $ { \Gamma^* }_{ x } $.
%In particular, the shortest path is often cited by AT's authors as the one that translates or captures the idea of complexity that the assembly index is aiming at quantifying \cite{cronin,marshall_murray_cronin_2017,Marshall2019,croninnature,croninentropy}.

In practice, the major characteristic that distinguishes assembly index compression from other traditional LZ schemes (like LZ77/78 or LZW), is that: 
in the latter case, one would parse the string into factors, that is, contiguous smaller strings (or codewords) whose resulting unidimensional concatenation directly represents the string;
while in the former case, one would parse the object into a non-unidimensional structure where each block is ``concatenated'' according to a directed acyclic multigraph with edge labels drawn from the set of vertices.
For a minimum rooted assembly (sub)space there may be more than one rooted path terminating at the final object to be assembled.
Each of these paths corresponds to a distinct way to parse (i.e., decompose) the object.

The sequence of edges that constitutes an assembling path (from the basis set $ B_\Gamma $ to the terminal object/vertex in a minimum rooted assembly subspace $  { \Gamma^* }_{ y } $) can be converted into a sequence of respective encoded tuples that contains pointers, codewords for other objects, etc.
One can think of these assembling paths being, for example, the sequence defined by either one of the longest rooted paths $ \gamma_x^{ \max } $ or one of the shortest rooted paths $ \gamma_x^{ \min } $ in the minimum rooted assembly subspace $ { \Gamma^* }_{ x } $.
In particular, the shortest path is often cited by AT's authors as the one that translates or captures the idea of complexity that the assembly index is aiming at quantifying \cite{cronin,marshall_murray_cronin_2017,Marshall2019,croninnature,croninentropy}.

Indeed, as demonstrated by the upcoming Theorem~\ref{thmLZcompressionfromassemblyspaces}, the length of the shortest paths is roughly the same as the number of factors (or the number of tuples of pointers) in the encoded form (see Equation~\eqref{eqSassemblespace}) of the LZ scheme that the assembly index calculation method is equivalent to.
Thus, as shown in Theorem~\ref{thmLZcompressionfromassemblyspaces}, the LZ compression scheme defined by the assembly index calculation method establishes a way to compress such a non-linear structure (in particular, a directed acyclic multigraph) and non-deterministic parsings/decompositions back into a linear sequence of codewords.
%The pointers in LZ compression schemes traditionally refer to encoded tuples that have occurred before as recorded by the associated dictionary.
%For Assembly Theory, the pointers may refer to other tuples that have not occurred before in an assembling path, but are part of the minimum rooted assembly subspace $ { \Gamma^* }_{ x } $, that is, part of a ``tree''-like structure that needs to be encoded in the dictionary, which is only built or queried during the encoding/decoding process.
We demonstrate that the minimality of $ { \Gamma^* }_{ x } $ in fact guarantees that such a scheme (where pointers may not refer to the unidimensional past subsequences) allows unique decoding, despite the non-unidimensional structure of assembly spaces in which more than one path may lead to the same object.\\

\begin{theorem}[LZ encoding from the assembly index calculation method]\label{thmLZcompressionfromassemblyspaces}
	Let $ \mathcal{ S } $ be enumerable by an algorithm.
	Let $ \mathbf{F} $ be an arbitrary formal theory that contains Assembly Theory, including all the decidable procedures of
	%	the chosen statistical method, 
	the chosen method for calculating the assembly index of an object for a nested subspace of $ \mathcal{ S } $,
	and the program that decides whether or not the criteria for building the assembly spaces are met.
	Let $ \Gamma \subset \mathcal{ S } $ with $ y \in V\left( \Gamma \right) $, $ \gamma_y^{ \min } $, and $ \gamma_y^{ \max } $ be arbitrary.
	Then, one has it that the sequence $ S_{ AT }\left( y \right) $ composed of $ \left| \gamma_y^{ \min } \right| $ encoded tuples in the form $ \left< c , i_t , v_l 
	% , \left< s_e\left( i_t \right) , t_e\left( i_t \right) \right> 
	\right> $ (except for the $ 0 $-th initial tuples which are $ \left< t \right> $, $ \left< \left| \gamma_y^{ \max } \right| \right> $, $ \left< B_\Gamma \right>  $, and $ \left< v_B \right> $, respectively, where $ v_B \in B_\Gamma $ and $ t \in \mathbb{P}\left( \mathcal{ T } \right) $) encodes the assembling process of the object $ y $ such that
		\begin{equation}\label{eq1thmEncodingfromshortestpaths}
			\begin{aligned}
				\mathbf{O}\left( 	\left| \gamma_y^{ \min } \right| \right)
%				+
%				\mathbf{O}(1)
				\leq \\
				\left| S_{ AT }\left( y \right) \right|
				\leq \\
				\mathbf{O}\left( 2^{ 3 \, \left| \gamma_y^{ \max } \right| + 6 } + \left| B_\Gamma \right| \log\left( \left| B_\Gamma \right| \right) \right) 
				+ 
				\mathbf{O}\left( \log\left( \left| \gamma_y^{ \max } \right| \right) \right)
				+ \\
				\mathbf{O}\left( \left( \left| B_\Gamma \right| + 1 \right) \, \log\left( \left| B_\Gamma \right| \right) \right) 
				+
				\left| \gamma_y^{ \min } \right| \Big( 
					\mathbf{O}\left( 3 \, \left| \gamma_y^{ \max } \right| + 6 \right)
					+
					\log\left( \left| \gamma_y^{ \min } \right| \right)
				\Big) 
				\leq \\
				\mathbf{O}\left( 2^{ 3 \, c_\Gamma\left( y \right) + 6 } + \left| B_\Gamma \right| \log\left( \left| B_\Gamma \right| \right) \right) 
					+ 
					\mathbf{O}\left( \log\left( c_\Gamma\left( y \right) \right) \right)
					+ \\
					\mathbf{O}\left( \left( \left| B_\Gamma \right| + 1 \right) \, \log\left( \left| B_\Gamma \right| \right) \right) 
					+
					c_\Gamma\left( y \right) \Big( 
						\mathbf{O}\left( 3 \, c_\Gamma\left( y \right) + 6 \right)
						+
						\log\left( c_\Gamma\left( y \right) \right)
					\Big)
			\end{aligned}
		\end{equation}
		holds, where:
		\begin{itemize}
			\item $ \left| \gamma_y^{ \min } \right| $ is the length of the rooted path $ \gamma_y^{ \min } $, which is one of the shortest rooted paths with $ y $ as terminal vertex that belongs to the minimum rooted assembly subspace $  { \Gamma^* }_{ y } $;
			
			\item $ \left| \gamma_y^{ \max } \right| $ is the length of the rooted path $ \gamma_y^{ \max } $, which is one of the longest rooted paths with $ y $ as terminal vertex that belongs to the minimum rooted assembly subspace $  { \Gamma^* }_{ y } $;
			
			\item $ \left< \left| \gamma_y^{ \max } \right| \right> $ is an encoding of the number $ \left| \gamma_y^{ \max } \right| $;
			
			\item $ \left< B_\Gamma \right>  $ is an encoding of the basis set $ B_\Gamma $;
			
			\item $ \left< v_B \right> $ is an encoding of the vertex $ v \in B_\Gamma $ which is the initial object/vertex of $ \gamma_y^{ \min } $;
			
			\item $ c $ is the counter of the tuple in the sequence $ S_{ AT }\left( y \right) $ (i.e., the natural number that indicates the position of the tuple) with $ 1 \leq c \leq \left| \gamma_y^{ \min } \right| $;
			
			\item $ \left< t \right> $ is an encoding of the $ t $-th member of $ \mathbb{P}\left( \mathcal{ T } \right) $, where $ \mathbb{P}\left( \mathcal{ T } \right) $ is the powerset of possible edges of the minimum directed acyclic multigraph $ \mathcal{ T } $ into which any possible minimum rooted assembly subspace (whose longest rooted paths have length $ \left| \gamma_y^{ \max } \right| $) can be embedded, and all the most distant vertices from the terminal vertex (i.e., $ y $) can only be the vertices in $ B_\Gamma $. 
			
			\item $ i_t $ is the index/pointer that refers to the corresponding edge in $ \mathcal{T} $;
			
			\item $ v_l $ is another index/pointer that refers to the edge $ i'_t $ whose source vertex $ s\left( i'_t \right) = v \in V\left( \Gamma \right) $ is the vertex that is the label of the edge $ i_t $ when $ \Gamma $ is embedded into $ \mathcal{T} $;
			
%			\item $ s_e\left(  i_t \right) $ is the source vertex of the edge $ i_t $

			\item $ \left| S_{ AT }\left( y \right) \right| $ denotes the size in bits of the encoded sequence $ S_{ AT }\left( y \right) $.
			
			\item $ c_\Gamma\left( y \right) $ gives the assembly index of the object $ y $ in the assembly space $ \Gamma $ (or $ \mathcal{ S } $);
		\end{itemize}

	\begin{proof}[The encoding-decoding scheme for Theorem~\ref{thmLZcompressionfromassemblyspaces}]
		Let $ p_5 $ be a program that receives a sequence $ S_{ AT }\left( y \right) $ in the form
		\begin{equation}\label{eqSassemblespace}
			\left< t \right>
			\left< \left| \gamma_y^{ \max } \right| \right>
			\left< B_\Gamma \right>
			\left< v_B \right>
			\left< 1 , { i_t }_1 , { v_l }_1  \right>
			\left< 2 , { i_t }_2 , { v_l }_2  \right>
			\left< 3 , { i_t }_3 , { v_l }_3  \right>
			\dots
			\left< \left| \gamma_y^{ \min } \right| , { i_t }_{ \left| \gamma_y^{ \min } \right| } , { v_l }_{ \left| \gamma_y^{ \min } \right| }  \right>
		\end{equation}
		as input.
		Then, by employing $ \mathbf{F} $, it constructs the directed acyclic multigraph (DAmG) $ \mathcal{ T } $ of height $ \left| \gamma_y^{ \max } \right| $, and enumerates all the possible minimum rooted assembly subspaces contained in $ \mathcal{ T } $ via the following procedure: it starts from those that are isomorphic to an arbitrarily fixed $ \gamma_y^{ \max } $ with any possible combination of initial vertex/object in $ B_\Gamma $;
		and then in a dovetailing procedure, it indexes first those subspaces that differ (from this $ \gamma_y^{ \max } $) by the least amount of additional edges.
		Then, program $ p_5 $ chooses the $ t $-th first $ \Gamma' $ that appears in this enumeration.
		In the next step, program $ p_5 $ reads the rest of the sequence $  S_{ AT }\left( y \right) $, extracting from the encoded tuples the sequence $ I = \left( { i_t }_1 , \dots , { i_t }_{ \left| \gamma_y^{ \min } \right| } , { v_l }_1 , \dots , { v_l }_{ \left| \gamma_y^{ \min } \right| } \right) $ of indexes of edges that are contained in this $ t $-th first $ \Gamma' $ such that it constitutes a shortest rooted path that starts at vertex $ v_B $.
		Finally, it uses this shortest rooted path to assemble the final object $ y' $ accordingly.
		This describes the \emph{decoding} procedure from $ S_{ AT }\left( y \right) $.
		In order to construct a program $ p'_5 $ that \emph{encodes} a minimum rooted assembly subspace $  { \Gamma^* }_{ y } $ into a $ S_{ AT }\left( y \right) $, one just needs to note that the codewords for 
%		$ \left< t \right> $, 
		$ \left< \left| \gamma_y^{ \max } \right| \right> $, $ \left< B_\Gamma \right> $, 
%		$ \left< v_B \right> $, 
%		$ \left< \gamma_y^{ \min } \right> $, 
		and $ \left< y \right> $ are unique and always retrievable by an algorithm given $ { \Gamma^* }_{ y } $ as input.
		From these, program $ p'_5 $ can use the same procedures that $ p_5 $ ran in order to construct $ \mathcal{T} $. %and enumerate all its possible minimum rooted assembly subspaces.
		Program $ p'_5 $ then employs the same procedures that $ p_5 $ ran in order to enumerate the minimum rooted assembly subspaces embedded into $ \mathcal{T} $ so as to find the index $ t $ that corresponds to $ { \Gamma^* }_{ y } $ in this enumeration.
		Next, it enumerates all the possible shortest rooted paths in this $ t $-th minimum rooted assembly subspace.
		Then, it extracts $ v_B $ and $ I $ from $ \mathcal{T} $ by matching those that correspond to the edges and their respective labels in the first shortest rooted path $ \gamma_y^{ \min } $ found in the previous enumeration that terminates at the object $ y $.
		Finally, the program returns the sequence $ S_{ AT }\left( y \right) $ as output.
	\end{proof}
	
	\begin{proof}[Proof of Theorem~\ref{thmLZcompressionfromassemblyspaces}]
		With the purpose of demonstrating that $ \Gamma' = { \Gamma^* }_{ y } $,
		first note that there is only a finite number of rooted assembly spaces whose most distant vertices (from the vertex/object $ y $) belong to $ B_\Gamma $ (which is also finite and fixed given $ \Gamma $) and are separated by $ \left| \gamma_y^{ \max } \right| $ hops.
		Therefore, since $ \left< \left| \gamma_y^{ \max } \right| \right> $ and $ \left< B_\Gamma \right> $ are 
                 distinct
                 from $ { \Gamma^* }_{ y } $ and the enumeration procedure is the same in both the encoding and the decoding phase, there is a unique index $ t $ that corresponds to $ \Gamma' = { \Gamma^* }_{ y } $.
		In order to achieve the proof of Equation~\eqref{eq1thmEncodingfromshortestpaths}, 
		one can exploit the minimality of $ { \Gamma^* }_{ y } $ and $ \gamma_y^{ \min } $ and the maximality of $ \gamma_y^{ \max } $.
		From any rooted assembly space one of whose longest rooted paths is $ \gamma_y^{ \max } $, in order to construct $ \mathcal{T} $, an algorithm can minimise the number of sufficient and necessary new vertices that need to be added to $ \gamma_y^{ \max } $ using the following procedure:
		assign an index $ j \in \mathbb{N} $ to each of the edges in $ \gamma_y^{ \max } $ starting from the closest and moving to the farthest ones from the initial vertex, which is the object in the basis set $ B_\Gamma $, such that $ 1 \leq j \leq \left| \gamma_y^{ \max } \right| $;
		for the label of the edge $ j = 1 $ in $ \gamma_y^{ \max } $, one only requires at most one extra vertex, and this vertex can only belong to $ B_\Gamma $ because otherwise, since there would need to be an additional edge connecting this new vertex to the target vertex of edge $ j = 1 $, it would contradict the maximality of $ \gamma_y^{ \max } $;
		the same analogously holds for each arbitrary $ j + 1 \leq \left| \gamma_y^{ \max } \right|$, so that labelling this edge $ j +1 $ with a vertex only requires at most one extra new vertex, and this vertex can only be at level $ \leq j $, i.e., it can only belong to the set of vertices that are separated from the terminal vertex $ y $ by at least $ \left| \gamma_y^{ \max } \right| - j + 1 $ hops (otherwise, it would again contradict the maximality of $ \gamma_y^{ \max } $);
		and for each new vertex created at level $ j $ one would only need at most $ 2 $ new vertices at level $ \leq j - 1 $, and so on.
		Thus, by induction, one will have it that
		\begin{equation}\label{eqMaximumverticesinGamma}
			\sum\limits_{ i = 0 }^{ \left| \gamma_y^{ \max } \right| } 2^{ i }
			=
			2^{ \left| \gamma_y^{ \max } \right| + 2 }
		\end{equation}
		is an upper bound for the maximum number of distinct vertices that a minimum rooted assembly subspace (whose $ \gamma_y^{ \max } $ is one of its longest rooted paths) can have.
		Following a similar inductive argument upon the maximality of $ \gamma_y^{ \max } $, one concludes that 
		\begin{equation}\label{eqMaximumedgesinGamma}
			2^{ 3 \, \left| \gamma_y^{ \max } \right| + 6 }
		\end{equation}
		is an upper bound for the maximum number of distinct (multi)edges that a minimum rooted assembly subspace can have.
		Hence, one only needs at most
		\begin{equation}
			\mathbf{O}\left( \log\left( 2^{ 3 \, \left| \gamma_y^{ \max } \right| + 6 } \right) \right)
			=
			\mathbf{O}\left( 3 \, \left| \gamma_y^{ \max } \right| + 6 \right)
		\end{equation}
		bits to encode either each $ i_t $ or each $ v_l $.
		(Notice that there may be tighter upper bounds than those in Equations~\eqref{eqMaximumverticesinGamma} and~\eqref{eqMaximumedgesinGamma}, but for the purposes of Theorem~\ref{thmLZcompressionfromassemblyspaces} they suffice).
		Therefore, we have it that $ { \Gamma^* }_{ y } $ can be embedded into $  \mathcal{T} $, and as a consequence there will be a $ t \in \mathbb{N} $, where
		\begin{equation}\label{eqUpperboundont}
			1
			\leq
			t
			\leq
			\left( 2^{ 2^{ 3 \, \left| \gamma_y^{ \max } \right| + 6 } } \right) \, \left| B_\Gamma \right| !
		\end{equation} 
		holds,
		such that $ t $ corresponds to this exact structure given by $ { \Gamma^* }_{ y } $ when program $ p_5 $ starts to enumerate all possible minimum assembly subspaces embedded into $ \mathcal{T} $.
		Then, one needs at most 
		\begin{equation}
			\mathbf{O}\left( \log\left( \left( 2^{ 2^{ 3 \, \left| \gamma_y^{ \max } \right| + 6 } } \right) \, \left| B_\Gamma \right| ! \right) \right)
			\leq
			\mathbf{O}\left( 2^{ 3 \, \left| \gamma_y^{ \max } \right| + 6 } + \left| B_\Gamma \right| \log\left( \left| B_\Gamma \right| \right) \right)
		\end{equation}
		bits to encode this $ t $.
		By knowing the exact DAmG structure of $ { \Gamma^* }_{ y } $, and to which edge each basis vertex is connected, an algorithm can employ $ \mathbf{F} $ to reconstruct both every object assembled across this DAmG structure and every rooted path that terminates at each of these objects.
%		Therefore, as demonstrated in Theorem~\ref{thmEncodingfromMA} that from the set of vertices, $ y' $ is unique, one will have it that $ y' = y $.
		Now, by the definition of assembly space and the assembly index, we have it that $ \left| \gamma_y^{ \min } \right| \leq \left| \gamma_y^{ \max } \right| \leq c_\Gamma\left( y \right) $.
		Therefore, from basic inequalities known from the theory of algorithmic complexity, when applied to the best way to encode each term in the sequence in Equation~\eqref{eqSassemblespace}, one obtains the proof of Equation~\eqref{eq1thmEncodingfromshortestpaths}.
	\end{proof}
\end{theorem}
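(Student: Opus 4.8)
The plan is to prove the theorem constructively, first exhibiting a matched encoder/decoder pair and then bounding the bit-length of the resulting sequence $S_{AT}(y)$. I would begin by specifying a decoder $p_5$ that, given a sequence of the form in Equation~\eqref{eqSassemblespace}, reads the initial tuples $\langle |\gamma_y^{\max}| \rangle$ and $\langle B_\Gamma \rangle$ and uses $\mathbf{F}$ to build a fixed containing directed acyclic multigraph $\mathcal{T}$ of height $|\gamma_y^{\max}|$ into which every candidate minimum rooted assembly subspace with longest path of that length embeds. Recognising minimum rooted assembly subspaces via $\mathbf{F}$, the decoder enumerates all such subspaces of $\mathcal{T}$ in a fixed dovetailing order, selects the $t$-th one, and then follows the pointers $i_t, v_l$ read off the remaining tuples to trace a shortest rooted path from $v_B$ and assemble the terminal object $y'$. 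The matching encoder $p_5'$ runs the same enumeration to locate the index $t$ of ${ \Gamma^* }_{ y }$ and emits the corresponding tuples. Since both phases share the identical canonical enumeration and $\mathcal{T}$ is finite, the index $t$ is well-defined, so the scheme is a genuine, uniquely decodable LZ-style code.

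The first substantive step is to verify that $y' = y$. I would reduce this to the minimality argument already established in Theorem~\ref{thmEncodingfromMA}: among the minimum rooted assembly subspaces embedded in $\mathcal{T}$ with the prescribed height and basis, the object lying at the end of the longest rooted paths is unique, so matching on $\langle |\gamma_y^{\max}| \rangle$, $\langle B_\Gamma \rangle$ together with minimality pins down $\Gamma' = { \Gamma^* }_{ y }$ and hence $y' = y$.

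Next I would establish the two combinatorial bounds that govern the upper estimate. Exploiting the maximality of $\gamma_y^{\max}$, a label vertex attached to the $j$-th edge of the longest path can only sit at a strictly lower level, since otherwise an additional edge into the target of edge $j$ would produce a longer rooted path and contradict maximality; an induction over levels then yields at most $\sum_{i=0}^{|\gamma_y^{\max}|} 2^i = 2^{|\gamma_y^{\max}|+2}$ vertices and at most $2^{3|\gamma_y^{\max}|+6}$ multiedges in $\mathcal{T}$. From the edge count, each pointer $i_t$ or $v_l$ costs $\mathbf{O}(3|\gamma_y^{\max}|+6)$ bits, while the number of minimum subspaces embeddable in $\mathcal{T}$ is at most $\left( 2^{2^{3|\gamma_y^{\max}|+6}} \right) |B_\Gamma|!$, so the index $t$ costs $\mathbf{O}\!\left( 2^{3|\gamma_y^{\max}|+6} + |B_\Gamma|\log|B_\Gamma| \right)$ bits.

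Finally, I would assemble the total length of $S_{AT}(y)$ by summing the costs of the four initial tuples, which contribute the $\langle t\rangle$ term, an $\mathbf{O}(\log|\gamma_y^{\max}|)$ term for $\langle|\gamma_y^{\max}|\rangle$, and an $\mathbf{O}((|B_\Gamma|+1)\log|B_\Gamma|)$ term for $\langle B_\Gamma\rangle$ and $\langle v_B\rangle$, together with the $|\gamma_y^{\min}|$ data tuples, each carrying two pointers of $\mathbf{O}(3|\gamma_y^{\max}|+6)$ bits and a counter of $\log|\gamma_y^{\min}|$ bits; this yields the middle expression of Equation~\eqref{eq1thmEncodingfromshortestpaths}. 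The outermost bound then follows from the structural inequalities $|\gamma_y^{\min}| \le |\gamma_y^{\max}| \le c_\Gamma(y)$, and the lower bound $\mathbf{O}(|\gamma_y^{\min}|)$ is immediate since the sequence contains at least $|\gamma_y^{\min}|$ tuples of at least constant size. The main obstacle I anticipate is the unique-decodability step: unlike classical LZ, the pointers here may reference parts of the assembly tree that never ``occurred before'' along a single linear pass, so the argument must rest essentially on the minimality of ${ \Gamma^* }_{ y }$, via Theorem~\ref{thmEncodingfromMA}, to guarantee that the canonical enumeration recovers one and the same subspace and terminal object on both sides of the code.
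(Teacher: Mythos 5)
Your proposal is correct and follows essentially the same route as the paper's proof: the same encoder/decoder pair built around the containing multigraph $\mathcal{T}$, the same canonical enumeration yielding a unique index $t$, the same inductive bounds $2^{\left|\gamma_y^{\max}\right|+2}$ and $2^{3\left|\gamma_y^{\max}\right|+6}$ from the maximality of $\gamma_y^{\max}$, and the same term-by-term bit accounting closed off by $\left|\gamma_y^{\min}\right| \leq \left|\gamma_y^{\max}\right| \leq c_\Gamma\left(y\right)$. The only cosmetic difference is that you delegate the uniqueness of the terminal object to Theorem~\ref{thmEncodingfromMA}, whereas the paper argues the uniqueness of the index $t$ directly from the shared enumeration procedure; both are sound.
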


In relation to the claims made by  
the authors of Assembly Theory\cite{cronin,marshall_murray_cronin_2017,Marshall2019,croninnature,croninentropy}, Theorem~\ref{thmLZcompressionfromassemblyspaces} reveals that Assembly Theory's fundamental intuition to classify the notion of complexity of assembled objects is not only that of a compression scheme, but specifically an LZ scheme.
The non-unidimensional ``tree''-like structure of the minimum rooted assembly subspaces is one of the ways to define a scheme that encodes the assembling process itself back into a linear sequence of codewords.

The encoding-decoding scheme behind Equation~\eqref{eqSassemblespace} translates the notion of the minimum assembling path in an assembly space, playing a role in how much simpler the object can get (i.e., as we have demonstrated, how much more compressible the object is).
Shorter minimum paths lead to more compressed forms of the assembling process of the objects.
The exponential growth in the LZ encoding from the assembly index method is, in turn, consonant with the assembly number \cite{croninnature} of an ensemble (see Section~\ref{sectionEncodingfromAN}).

In the same manner, Theorem~\ref{thmLZcompressionfromassemblyspaces} demonstrates that ``simpler'' minimum rooted assembly subspaces lead to more compressibility.
This notion of simplicity encompasses both how much the assembly space structure differs from a single-thread (or linear) space and how many more distinct assembling paths can lead to the same object.

For example, the reader is invited to note that due to the fact that $ \gamma_y^{ \max } = \gamma_y^{ \min } $ would hold, a linear minimum rooted assembly subspace $ { \Gamma^* }_{ y } $ (such as in \cite[Figure 1(a)]{cronin} or \cite[Figure 8]{Marshall2019}) results in a much tighter upper bound in Equation~\eqref{eq1thmEncodingfromshortestpaths}.
This is because the exponential dependence on the assembly index to encode all possible DAG-like structures would drop: 
that is, $ 1 \leq t \leq \mathbf{O}\left( \left| \gamma_y^{ \max } \right|^2 \, \left| B_\Gamma \right| \right) $ will hold instead of Equation~\eqref{eqUpperboundont}, which in turn will be subsumed in the rightmost term of Equation~\eqref{eq1thmEncodingfromshortestpaths} where one already has a quadratic dependence on $ \left| \gamma_y^{ \max } \right| $ or $ c_\Gamma\left( y \right) $.

Once a less linear (or more DAG-like) assembling structure 
unfolds, as $ \gamma_y^{ \max } $ begins to differ more from $ \gamma_y^{ \min } $, the encoding or decoding algorithms in Theorem~\ref{thmLZcompressionfromassemblyspaces} will have to enumerate more distinct possible minimum rooted assembly spaces, and as a consequence it will increase the upper bound for $ \left| S_{ AT }\left( y \right) \right| $, which also renders the encoding and decoding process slower. 
Finding the minimum rooted assembly subspace for an object is a computationally expensive problem, as demonstrated in Section~\ref{sectionLZGrammar} and acknowledged by AT's authors \cite{cronin,marshall_murray_cronin_2017,Marshall2019,croninnature,croninentropy}.
To tackle this intractability, AT employed some approximation techniques, including a split-branch version of the assembly index.

Thus, Theorem~\ref{thmLZcompressionfromassemblyspaces} not only demonstrates that those minimum rooted assembly subspaces (from which the assembly indexes are calculated) are LZ schemes, but also that the intractability of the calculation of the assembly index (or, equivalently, the intractability of the encoding-decoding procedure) derives from the notion of complexity that the assembly index intends to measure.
Together with the height of the grammars in Theorem~\ref{main}, this also captures the notion of ``depth'' that AT intends to attain but that in fact pertains, as we have demonstrated, to a compression scheme and a generative grammar.
The more complex the DAG-like structure of $ { \Gamma^* }_{ y } $ is, the more distinct DAG-like structures are possible, and therefore the closer the upper bound becomes to the exponential dependence. 
Hence, Equation~\eqref{eq1thmEncodingfromshortestpaths} would render the object (or assembling process) less compressible,
which demonstrates the claims underlying Assembly Theory that the authors in \cite{cronin,marshall_murray_cronin_2017,Marshall2019,croninnature,croninentropy} originally intended to demonstrate with the assembly index but that in fact we demonstrated with a compression scheme.
These results show that the notion of how complex an assembling process needs to be in order to construct an object is indeed equivalent to how much less compressible (by a LZ scheme) this process is.

Further, as in Section~\ref{sectionEncodingfromAN}, one can demonstrate that any measure defined in terms of the assembly index (so that it increases when the assembly index(es) also increases), such as the assembly number~\cite{croninnature}, also defines an encoding and a compression method. The higher the assembly value of an object, the more the object is LZ compressible and the other way around, in direct and equal proportion.

\subsubsection{The assembly number defines an encoding and a compression scheme}\label{sectionEncodingfromAN}

As proposed in \cite{croninnature}, the assembly number $ A\#\left( X \right) = \sum_{ i = 1 }^{ N } e^{ a_i } \left( \frac{ n_i - 1 }{ N_T } \right) $ is intended to measure the amount of selection necessary to produce the ensemble $ X $, i.e., the presence of constraints or biases in the underlying generative processes (e.g., the environment in which the objects were assembled) that set the conditions for the appearance of the objects.
A higher value of $ A\#\left( X \right) $ would mean that more ``selective forces'' were put into play as environmental constraints or biases in order to allow or generate a higher concentration of high-assembly-index elements.
Otherwise, these high-assembly-index objects would not occur as often in the ensemble.
This is because in stochastically random scenarios the probability rapidly decreases as the assembly index increases.

The value of $ A\# $ is calculated from the parameters $ N_T $, $ N $, $ n_i $, $ a_i $, which are retrievable by an algorithm from a given ensemble $ X $ whose number of copies of each unique object is greater than or equal to $ 1 $.
The constant $ N_T $ (i.e., the total number of objects in the ensemble $ X $) allows one to compare the assembly number between distinct ensembles whose sizes differ.

Notice that each term $ e^{ a_i } $ plays the role of weight for the respective ratio/proportion $ \frac{ n_i - 1 }{ N_T } \leq 1 $.
In this manner, one can, for example, normalise again for ensembles whose overall summation of assembly indexes remains the same, hence only differing by the concentration/distribution of the quantity of objects with higher assembly index value.
As stated in \cite{cronin,marshall_murray_cronin_2017,Marshall2019,croninnature}, grasping this distinction in fact is the main goal that Assembly Theory aims at with its methods, so as to measure the presence of constraints or biases that turn otherwise unlikely objects into objects that occur more often.

For any two ensembles $ X $ and $ X' $ with the same total number $ N_T $ of objects and whose overall summation $ \sum_{ i = 1 }^{ N } e^{ a_i } $ of assembly indexes is also the same, the ensemble with the higher concentration of high-assembly-index objects (with respect to the concentration of these objects in the other ensemble) should score higher on the measure that assembly theory aims to develop.
This prevents a (n equal-size) comparison between a biased sampling and another sampling in which the former was taken under distinct conditions in the latter.
In this case, without normalising by the summation $ \sum_{ i = 1 }^{ N } e^{ a_i } $, one could have the first one scoring a higher assembly number than the second one, but only because of the set of unique objects, and not because of the distribution of higher-assembly-index objects with respect to lower-assembly-index objects, which was what Assembly Theory intended in the first place.\footnote{In order to avoid the presence of false positives due to abiotic mimicry 
\cite{Gillen2023CallNewDefinitionBiosig,Malaterre2023ThereSuchThingBiosig}, an even better control group for excluding sampling condition biases in experiments could be achieved by only comparing ensembles with not only equal size and equal total summation of assembly indexes but also the same set of unique objects. Notice that the forthcoming results in Theorem~\ref{thmEncodingfromAN} with regards to assembly number hold analogously 
for any further normalisation than the one already given by Equation~\eqref{eqRenormAN}. }
For example, to tackle this issue, the value given by the measure
\begin{equation}\label{eqRenormAN}
	A'\#\left( X \right)
	=
	\frac{ \sum\limits_{ i = 1 }^{ N } e^{ a_i } \left( \frac{ n_i - 1 }{ N_T } \right) }{
	\sum\limits_{ \tiny \begin{array}{c} 
		X_j \in \mathbf{X}  \\
		\sum\limits_{ i = 1 }^{ N } e^{ a_i } = \sum\limits_{ i = 1 }^{ N_j } e^{ a_{ i j } }
	\end{array} }
	\; \left( 
	\sum\limits_{ i = 1 }^{ N_j } e^{ a_{ i j } } \left( \frac{ n_{ i j } - 1 }{ { N_T }_j } \right) 
	\right)}
\end{equation} 
improves on the originally proposed measure $ A\#\left( X \right) $ in \cite{croninnature}.

In case the set of possible objects is finite and the maximum size of the ensembles investigated is also finite, then there is a finite set $ \mathbf{X} $ of possible unique ensembles.
In this scenario, one can always keep effecting further renormalisations upon Equation~\eqref{eqRenormAN} to account for any possible ensemble in $ \mathbf{X} $, in this manner addressing a wider range of control groups in their experiments.

In general, for any other measure $ { A'' }\#\left( \cdot \right) $ that one can possibly devise such that this measure is monotonically dependent\footnote{ That is, if $ A\#\left( X_i \right) \leq A\#\left( X_j \right) $, then $ { A'' }\#\left( X_i \right) \leq { A'' }\#\left( X_j \right) $.} on $  A\#\left( \cdot \right) $ (e.g., by applying any form of normalisation), one can directly apply the Krafts inequalities and the source coding theorem \cite{Cover2005}, should the ensembles be indeed sampled accordingly and the number of samples sufficiently large:
one constructs an optimal code that statistically compresses each ensemble $ X $ in sequences whose length is $ C'_A\left( X \right) $ such that
\begin{equation}\label{eqShannonAN}
	- \log\left( { A'' }\#\left( X \right) \right)
	\leq
	C'_A\left( X \right)
	\leq
	- \log\left( { A'' }\#\left( X \right) \right)
	+
	\mathbf{O}(1)
%	\text{ ,}
\end{equation}
is expected to hold on average.
%where $ \left| B_\mathcal{ S } \right| $ is the size of the finite alphabet composed of the basis elements in $ B_\mathcal{ S }  $.
As proposed by Assembly Theory, notice from Equation~\eqref{eqShannonAN} that ensembles with higher assembly numbers are expected to be more compressible, and therefore they diverge more from those ensembles that are more statistically random.
This demonstrates that in the case of pure \emph{stochastic processes}, assembly number is either a \emph{suboptimal} or \emph{an optimal} compression method.

In case the set of possible ensembles can be arbitrarily large and the ensembles are \emph{not} generated by stationary and ergodic sources---for example, in this case the minimum expected codeword length per symbol may not converge to the entropy rate---finding the optimal code is not as straightforward as finding the code that generated Equation~\eqref{eqShannonAN}.
To tackle this generalised scenario, algorithmic complexity theory improves on any entropy-based coding methods, or on any pure statistical compression method.
By using algorithmic information theory (AIT), algorithmic complexity being one of its indexes, one obtains the universally optimal encoding,
and hence one demonstrates in Theorem~\ref{thmEncodingfromAN} and Corollary~\ref{thmMorecompressfromAN} that the assembly number value is equivalent to the degree of compressibility of the ensemble.
That is, the assembly number calculation method is subsumed into the universal coding of AIT, demonstrating that the assembly number, in fact, captures the idea of compressiblity of the ensemble---not to be conflated with the object--- although only as an approximation that can, in general, be arbitrarily distorted (see Section~\ref{sectionDistortionsforAN}). 
%In this way, even in the general case when the only condition known is that there is a formal theory that can calculate the probability distribution of the events, one demonstrates that 
%These can be applied  because Assembly Theory's methods can always be calculated by an algorithm (i.e., its methods are recursive/computable).\\

\begin{theorem}\label{thmEncodingfromAN}
	Let $ \mathcal{ S } $ and $ \mathbf{X} $ be enumerable by an algorithm.
	Let $ { A'' }\#\left( \cdot \right) $ be an arbitrary measure (or semimeasure) that can be effectively calculated by an algorithm 
	%such that $ { A'' }\#\left( \cdot \right) $ is monotonic on 
	from the assembly number $ A\#\left( \cdot \right) $.
	Let $ \mathbf{F}' $ be an arbitrary formal theory that contains Assembly Theory, including all the decidable procedures of
	%	the chosen statistical method, 
	the chosen method for calculating the assembly number $ A\# $, the measure $ { A'' }\#\left( \cdot \right) $ for any subspace of $ \mathcal{ S } $ and ensemble in $ \mathbf{X} $,
	and the program that decides whether or not the criteria for building the assembly spaces or ensembles of objects are met.
	Let  $ X \in \mathbf{X} $ be an arbitrary ensemble.
	Then, one can encode/compress the ensemble $ X $ in $ C_A\left( X \right) $ bits such that
	\begin{equation}\label{eqCompressfromAN}
		- \log_{ 2 }\left( { A'' }\#\left( X \right) \right)
		\leq
		C_A\left( X \right) 
		\leq 
		- \log_{ 2 }\left( { A'' }\#\left( X \right) \right)
		+
		\mathbf{O}( 1 )
%		\text{ ,}
	\end{equation}
	and 
	\begin{equation}\label{eqLowerboundingAN}
			\mathbf{K}\left( X \right)
			\leq
%			\mathbf{K}\left(  
%			c_\Gamma\left( y \right) 
%			\right)
			C_A\left( X \right)
			+ \mathbf{O}(1)
%			\leq
%%			\mathbf{K}\left(  
%			c_\Gamma\left( y \right) 
%%			\right)
%			+ \mathbf{K}\left(  \left< V\left( { \Gamma^* }_{ y }  \right) \right> \right)
%			+ \mathbf{O}(1)
		\end{equation}
	hold, where $ C_A\left( X \right) $ denotes the size of the codeword/program for the ensemble $ X $. 

	\begin{proof}
		By hypothesis we have it that there is an algorithm that can always calculate the value of $ { A'' }\#\left( \cdot \right) $ for any $ X $.
		Hence, there will be another algorithm that can always calculate
		\begin{equation}\label{eq2EncodingfromAN}
			l_X \coloneqq  \lceil - \log_{ 2 }\left( { A'' }\#\left( X \right) \right) \rceil 
		\end{equation}
		from any ensemble $ X $.
		Then, one directly obtains that the sequence
		\begin{equation}
			\left( \left( l_{ X_1 } , X_1 \right) , \dots , \left( l_{ X_i } , X_i \right) , \dots \right)
		\end{equation}
		is enumerable\footnote{ In fact, not only computably enumerable, but in this case also computable.} by an algorithm.
		In addition, due to the fact that $ { A'' }\#\left( \cdot \right) $ is a (semi)measure by hypothesis, one has it that
		\begin{equation}
			\sum\limits_{ X_i \in \mathbf{ X } } \frac{ 1 }{ 2^{ l_{ X_i }  } } \leq 1
			\text{ .}
		\end{equation}
%		\end{equation}
		From AIT~\cite{Downey2010}, one knows that if an arbitrary sequence $ \left( \left( d_1 , \tau_1 \right) , \dots , \left( d_i , \tau_i \right) , \dots \right) $ of pairs is enumerable and $ \sum\limits_{ i } 2^{ - d_i  } \leq 1 $, then one can encode each $ \tau_i $ in a codeword of length $ d_i $.
		Additionally, one also has it that
		\begin{equation}
			\mathbf{K}\left( \tau_i \right)
			\leq
			d_i
			+ \mathbf{O}(1)
			\text{ .}
		\end{equation}
		Therefore, from basic properties and inequalities in AIT, we will have it that there is an algorithm that can compress $ X $ in a program of length $ C_A\left( X \right) $ such that Equations~\eqref{eqCompressfromAN} and~\eqref{eqLowerboundingAN}
%		\begin{equation}
%			C_A\left( X \right) \leq - \log_{ 2 }\left( { A'' }\#\left( X \right) \right)
%			+
%			\mathbf{O}( 1 )
%		\end{equation}
%		bits and
%		\begin{equation}
%			\mathbf{K}\left( X \right)
%			\leq
%			C_A\left( X \right)
%			+ \mathbf{O}(1)
%		\end{equation}
		hold.
	\end{proof}

\end{theorem}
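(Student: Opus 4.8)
The plan is to reduce Theorem~\ref{thmEncodingfromAN} to the standard Kraft--Chaitin (or Kraft inequality) machinery of Algorithmic Information Theory, exactly as signalled by the hypotheses that both $\mathcal{S}$ and $\mathbf{X}$ are algorithmically enumerable and that $A''\#$ is effectively computable from $A\#$. The core idea is that a computable length function satisfying Kraft's inequality can always be realised by an actual prefix code, and that such a code is dominated (up to an additive constant) by algorithmic complexity. So the whole statement follows once I set up the length assignment $l_X := \lceil -\log_2(A''\#(X)) \rceil$ and verify the two ingredients that the Kraft--Chaitin theorem requires: computable enumerability of the list of (length, object) pairs, and the summability condition $\sum_X 2^{-l_X} \le 1$.

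First I would establish computability. By hypothesis there is an algorithm computing $A''\#(X)$ for any ensemble $X$, so composing it with the (computable) ceiling-of-negative-log operation yields an algorithm computing $l_X$. Because $\mathbf{X}$ is enumerable by an algorithm, the sequence $\left( (l_{X_1}, X_1), (l_{X_2}, X_2), \dots \right)$ is computably enumerable (indeed computable, as the footnote notes). Second I would establish the Kraft condition. Here I lean on the hypothesis that $A''\#$ is a \emph{(semi)measure}: by definition this means $\sum_{X_i \in \mathbf{X}} A''\#(X_i) \le 1$, and since $2^{-l_{X_i}} = 2^{-\lceil -\log_2(A''\#(X_i))\rceil} \le A''\#(X_i)$, the partial sums of $2^{-l_{X_i}}$ are bounded by those of $A''\#(X_i)$, giving $\sum_{X_i} 2^{-l_{X_i}} \le 1$.

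With these two facts in hand, I invoke the Kraft--Chaitin theorem as stated in the excerpt: for any computably enumerable sequence $\left( (d_i, \tau_i) \right)$ of pairs with $\sum_i 2^{-d_i} \le 1$, there is a prefix code assigning each $\tau_i$ a codeword of length $d_i$, together with the bound $\mathbf{K}(\tau_i) \le d_i + \mathbf{O}(1)$. Applying this to our sequence produces an algorithm that compresses each $X$ into a program of length $C_A(X) = l_X$, so that $C_A(X) \le -\log_2(A''\#(X)) + \mathbf{O}(1)$ (the $+\mathbf{O}(1)$ absorbing the ceiling) and, for the lower direction, $C_A(X) \ge -\log_2(A''\#(X))$ by definition of $l_X$; these together give Equation~\eqref{eqCompressfromAN}. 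The Kolmogorov bound then yields $\mathbf{K}(X) \le C_A(X) + \mathbf{O}(1)$, which is Equation~\eqref{eqLowerboundingAN}.

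The main obstacle, and the point deserving the most care, is the summability inequality $\sum_X 2^{-l_X} \le 1$, since the whole invocation of Kraft--Chaitin collapses without it. This is precisely why the hypothesis insists that $A''\#$ be a \emph{(semi)measure} rather than an arbitrary computable quantity derived from $A\#$: the raw assembly number $A\#(X) = \sum_i e^{a_i}(n_i-1)/N_T$ is \emph{not} normalised to sum to one over $\mathbf{X}$, so the theorem really applies to the normalised variants such as $A'\#$ in Equation~\eqref{eqRenormAN}, which is the role the renormalisation discussion plays. I would therefore state explicitly that the $(semi)measure$ assumption supplies $\sum_X A''\#(X) \le 1$, and only then carry out the routine $2^{-\lceil \cdot \rceil} \le A''\#(\cdot)$ comparison; everything downstream is then a mechanical application of standard AIT inequalities.
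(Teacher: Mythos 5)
Your proposal is correct and follows essentially the same route as the paper's own proof: define $l_X = \lceil -\log_2(A''\#(X))\rceil$, observe that the list of pairs $(l_{X_i},X_i)$ is computably enumerable and satisfies Kraft's inequality via the (semi)measure hypothesis, and then invoke the Kraft--Chaitin theorem to obtain both the code-length bounds and the domination by $\mathbf{K}$. You in fact spell out the step $2^{-l_{X_i}} \le A''\#(X_i)$ and the role of normalisation slightly more explicitly than the paper does, but the argument is the same.
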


Now, from Theorem~\ref{thmEncodingfromAN}, one can employ a monotone dependence on the assembly number to demonstrate that a higher assembly number indeed implies greater compressibility of the ensemble.

\begin{corollary}\label{thmMorecompressfromAN}
%	Let $ \mathbf{ X } $ be infinite.
	Let the conditions for Theorem~\ref{thmEncodingfromAN} be met such that $ { A'' }\#\left( \cdot \right) $ is also \emph{strictly monotonic} on the assembly number $ A\#\left( \cdot \right) $.
%	Let  $ X \in \mathbf{X} $ be arbitrary.
	Then, there is $ k_0 \in \mathbb{ N } $ such that, if $ X , X' \in \mathbf{ X } $ are ensembles with $ A\#\left( X \right) + k_0 \leq A\#\left( X' \right) $, one has it that
	\begin{equation}\label{eqMorecompressfromAN}
		C_A\left( X' \right)
		<
		C_A\left( X \right)
	\end{equation}
	holds,
	where $ C_A\left( Y \right) $ denotes the size of the codeword/program for an arbitrary ensemble $ Y $.
	
	\begin{proof}
		Since  $ { A'' }\#\left( \cdot \right) $ is \emph{strictly} monotonic on $ A\#\left( \cdot \right) $,
%		and $ \mathbf{ X } $ is arbitrarily large, 
		then for every $ \epsilon > 0 $, there is a sufficiently large $ \epsilon' > 0 $ such that, if $ A\#\left( X \right) + \epsilon' \leq A\#\left( X' \right) $, one has it that $ { A'' }\#\left( X \right) + \epsilon \leq { A'' }\#\left( X' \right) $ holds.
		As a consequence, one can always pick a sufficiently large $ k_0 \in \mathbb{ N } $ and arbitrary ensembles $ X $ and $ X' $ where $ A\#\left( X \right) + k_0 \leq A\#\left( X' \right)  $ such that
		\begin{equation}
			- \log_{ 2 }\left( { A'' }\#\left( X' \right) \right)
			+
			\mathbf{O}( 1 )
			<
			- \log_{ 2 }\left( { A'' }\#\left( X \right) \right)
		\end{equation}
		holds.
		Therefore, Equation~\eqref{eqMorecompressfromAN} follows from the inequality in Equation~\eqref{eqCompressfromAN}.
	\end{proof}
\end{corollary}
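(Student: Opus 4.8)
The plan is to read the strict inequality off the two--sided encoding bound of Theorem~\ref{thmEncodingfromAN}, treating the additive $\mathbf{O}(1)$ slack in Equation~\eqref{eqCompressfromAN} as the only quantity that must be beaten. First I would fix a constant $c$ realising that $\mathbf{O}(1)$ term, so that for every $Y \in \mathbf{X}$ one has
\[
	-\log_{2}\!\left( A''\#\left( Y \right) \right)
	\;\leq\;
	C_A\left( Y \right)
	\;\leq\;
	-\log_{2}\!\left( A''\#\left( Y \right) \right) + c .
\]
Applying the upper bound to $X'$ and the lower bound to $X$, the target $C_A(X') < C_A(X)$ follows as soon as
\[
	-\log_{2}\!\left( A''\#\left( X' \right) \right) + c
	\;<\;
	-\log_{2}\!\left( A''\#\left( X \right) \right) ,
\]
that is, as soon as the logarithmic ratio $\log_{2}\!\bigl( A''\#(X')/A''\#(X) \bigr)$ exceeds $c$. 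The whole corollary therefore reduces to producing a single threshold $k_0$ that forces this one logarithmic gap.

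The lever for producing that gap is the strict monotonicity hypothesis: by the footnote notion of dependence, $A\#(X) < A\#(X')$ forces $A''\#(X) < A''\#(X')$, so a larger assembly number yields a strictly larger $A''\#$ and hence a strictly smaller codeword target $-\log_2 A''\#$. I would upgrade this order preservation to the quantitative form actually used, namely that for each prescribed separation $\delta>0$ there is a threshold $\kappa(\delta)$ with $A\#(X)+\kappa(\delta)\leq A\#(X')$ implying $A''\#(X)+\delta\leq A''\#(X')$. Since the proof of Theorem~\ref{thmEncodingfromAN} uses that $A''\#$ is a (semi)measure, one has $A''\#(X)\leq 1$, so an additive separation $\delta$ together with $A''\#(X')\geq A''\#(X)+\delta$ gives $A''\#(X')/A''\#(X)\geq 1+\delta/A''\#(X)$; once this ratio clears $2^{c}$ the required logarithmic gap is in hand, and setting $k_0=\lceil\kappa(\delta)\rceil$ would then deliver Equation~\eqref{eqMorecompressfromAN} from Equation~\eqref{eqCompressfromAN}.

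The hard part is exactly this upgrade from order preservation to a \emph{uniform} additive separation. Strict monotonicity per se does not guarantee that a fixed domain gap $k_0$ produces an image gap bounded away from zero, since a strictly increasing dependence may compress arbitrarily large input gaps --- any bounded strictly increasing $A''\#$ refutes a naive reading. The subtlety is sharpened by a genuine tension: the values of $A\#$ over $\mathbf{X}$ are discrete and unbounded (Lemma~\ref{thmLargeAssemblynumber} and Corollary~\ref{thmInfmanyLargeAN}), yet a strictly increasing $A''\#$ with $\sum_{X\in\mathbf{X}}A''\#(X)\leq 1$ must have $A''\#$ decaying along the large--$A\#$ ensembles, which is incompatible with the separation $\delta$ being available uniformly in $X$. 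I would therefore not attempt the universal statement over all large--gap pairs directly, but restrict the comparison to the normalised/controlled class of Equation~\eqref{eqRenormAN} --- ensembles matched in size and in total assembly weight --- where $A''\#(X)$ stays bounded away from its supremum on the relevant range, so that a coercive additive separation $\delta\geq 2^{c}-1$ is genuinely attainable and $\kappa(\delta)$ can be constructed; verifying that this $\kappa(\delta)$ is uniform over the admissible pairs $(X,X')$ is where the real work of the proof would lie.
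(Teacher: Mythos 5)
Your reduction is exactly the paper's: apply the upper bound of Equation~\eqref{eqCompressfromAN} to $X'$ and the lower bound to $X$, so that everything hinges on forcing $-\log_2 A''\#(X') + c < -\log_2 A''\#(X)$ for a fixed constant $c$ realising the $\mathbf{O}(1)$ term. The step you isolate as ``the hard part'' --- upgrading order preservation to a uniform quantitative separation, i.e.\ producing $\kappa(\delta)$ such that $A\#(X)+\kappa(\delta) \leq A\#(X')$ forces $A''\#(X)+\delta \leq A''\#(X')$ --- is precisely the step the paper's proof disposes of in one sentence: it asserts that strict monotonicity gives, for every $\epsilon>0$, a sufficiently large $\epsilon'$ with exactly this property, and then picks $k_0$ accordingly. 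So relative to the paper you have followed the same route; the difference is that you flag that assertion as needing proof while the paper treats it as immediate.

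Two remarks on the substance of your worry. First, you are right that order-theoretic strict monotonicity alone does not yield the uniform additive separation: a bounded strictly increasing dependence compresses arbitrarily large domain gaps, and the (semi)measure constraint used in the proof of Theorem~\ref{thmEncodingfromAN}, namely $\sum_{X_i} 2^{-l_{X_i}} \leq 1$, is in tension with $A''\#$ being increasing along an infinite family of ensembles of unbounded assembly number. The paper's proof is only valid under a stronger reading of ``strictly monotonic'' in which image gaps can be made as large as desired by taking domain gaps large enough --- which is literally what its first sentence asserts, so the paper is in effect building the needed coercivity into the hypothesis rather than deriving it. Second, your proposed repair --- restricting the comparison to the normalised, matched class around Equation~\eqref{eqRenormAN} --- changes the statement: the corollary as written quantifies over arbitrary $X, X' \in \mathbf{X}$ with $A\#(X)+k_0 \leq A\#(X')$, so a proof on a subclass does not establish it, and you acknowledge the uniformity of $\kappa(\delta)$ over admissible pairs is left unverified. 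As a blind attempt your proposal therefore stops one step short of a proof of the stated corollary, but the step at which it stops is the same one the paper's own proof asserts without argument.
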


As proposed by \cite{cronin,marshall_murray_cronin_2017,Marshall2019,croninnature,croninentropy}, our results demonstrate that ensembles\footnote{Not to be conflated with the objects themselves.} with higher assembly numbers\footnote{Not to be conflated with the assembly indexes.} are more compressible, and therefore they diverge more from those ensembles that are outcomes of perfectly random processes.
A more random generative process for an ensemble (for which the assembly number scores lower) would mean that there are fewer constraints or biases playing a role in this process.
In turn, high-assembly-index objects would more rarely appear (or be constructed) because they are more complex (i.e., as we have demonstrated in the previous sections, less compressible or more incompressible)---this is what occurs in scenarios in which there is a weaker presence of top-down (or downward) causation \cite{Abrahao2021bEmergenceAIDPTRSA} behind the possibilities or paths that lead to the construction of the objects.
As demonstrated in this section, if one assumes that the assembly number indeed quantifies the presence of constraints or biases in the underlying generative processes of the ensembles, then a more compressible ensemble implies that more constraints or biases played a role in generating more unlikely (i.e., less compressible) objects more frequently than would have been the case in an environment with fewer constraints and biases (i.e., a more random or incompressible environment), hence increasing the frequency of occurrence of high-assembly-index (i.e., less compressible) objects in this environment.
Conversely, under the same assumption, if more biotic (or less random abiotic) processes being present in an ensemble imply a higher assembly number, then more biotic processes being present in an ensemble would also imply that the ensemble is more compressible.

Therefore, along with Section~\ref{sectionoAToverestimation}, the results in this
Section~\ref{sectionEncodingfromAN} demonstrate that
in the \emph{general case} (even when the underlying generative process of the ensembles is unknown) \emph{assembly number} $ A\#\left( \cdot \right) $ is an \emph{approximation} to algorithmic complexity (or algorithmic probability), but a \emph{suboptimal} measure with respect to the more general methods from algorithmic information theory.

%{; and that it can only be either \emph{suboptimal} or \emph{optimal} to compress (stationary and ergodic) stochastic sources.}

%\subsection{Computational resources and complexity}\label{sectionLogicalDepth}

%\subsubsection{Assembly theory defines a class of objects whose generative process is computable}\label{sectionProblem3}

\subsection{Lack of empirical evidence in favour of Assembly Theory and presence of empirical evidence against it}\label{sectionLogicalDepth}

The experiments and analysis of organic versus non-organic molecules from mass spectral data using several complexity indexes used as positive evidence for Assembly Theory in~\cite{marshall_murray_cronin_2017} empirically demonstrates what we have proven: that the assembly index is equivalent to statistical compression, given that it was not compared to any other data type or any other existing index of the same nature. The simplicity of the experiments required to prove the utility of the assembly index and the main hypothesis of Assembly Theory inclines us to believe that this was a serious omission.

Together with the theoretical results, Figure~\ref{figureBiologicalComplexityplot} from~\cite{salient} shows that the evidence that the AT authors present as being in favour of their theory and algorithms, specifically that their assembly (compression) index may correspond to the way objects assemble, is flawed and invalid. The empirical evidence actually shows that other, equally plausible and valid compression schemes that are sequential, count copies, and build a repeating symbol dictionary tree, are likely the ones from which AT's discriminatory power derives and on which AT's concepts and methods are based.

\begin{figure}[ht!]
	\centerline{\includegraphics[scale=0.8]{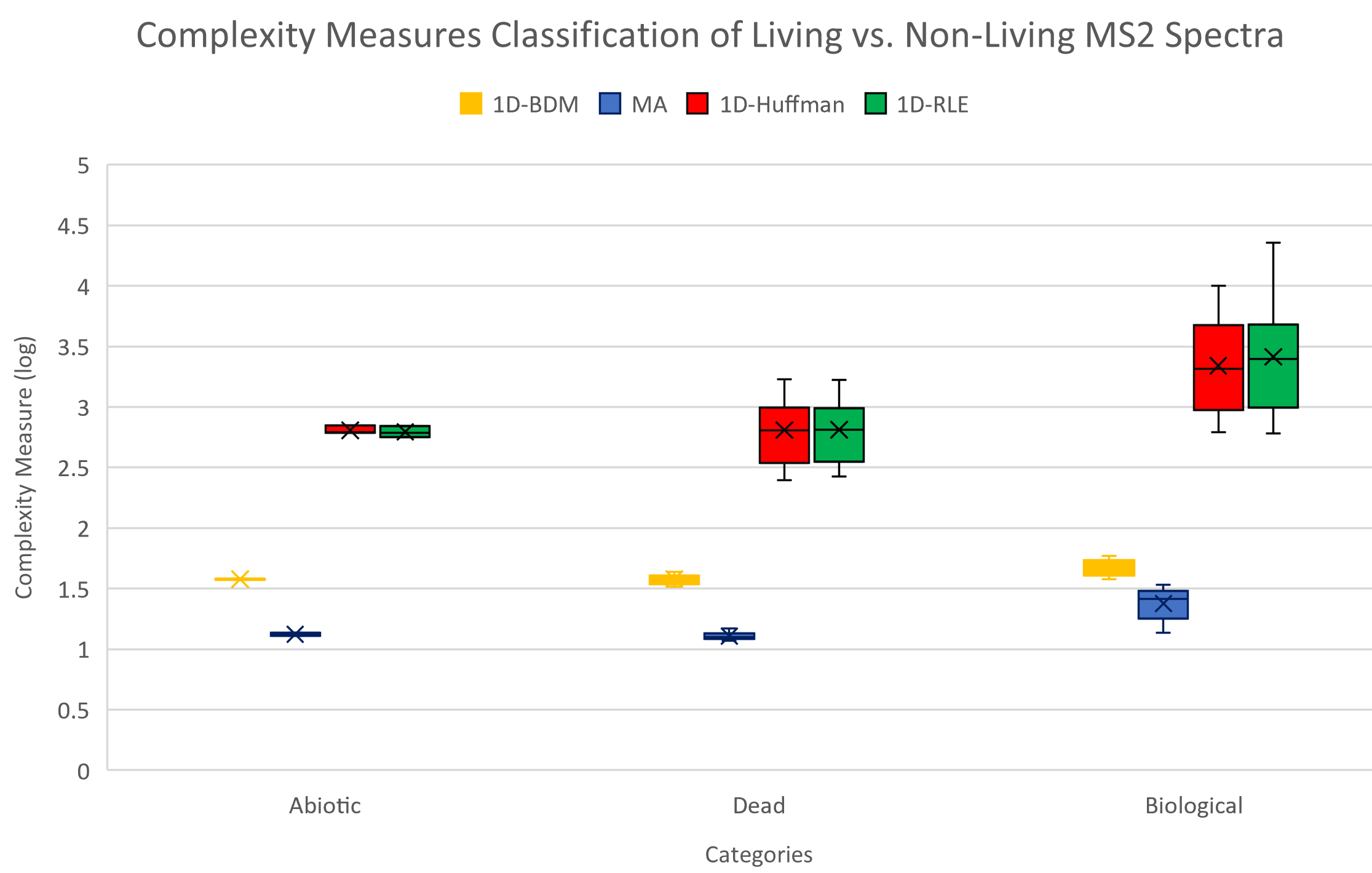}}
	\caption{\label{figureBiologicalComplexityplot}Taken from~\cite{salient}. The most basic statistical complexity indexes applied to the 18 molecular compounds for which the spectral data was made available from~\cite{marshall_murray_cronin_2017}. MA stands for Molecular Assembly based on the assembly index of the molecular compounds. These results were obtained even without optimising the algorithms after a simple flattening and bnarisation procedure of the data (full methods, data and code available at~\cite{salient}). The results show that other statistical measures separate the data just as MA does (or better) between living and non-living compounds.}
\end{figure}

The strongest positive correlation in Fig.~\ref{figureBiologicalComplexityplot}, was identified between MA and 1D-RLE coding (R= 0.9), which is one of the most basic coding schemes in computer science, known since the 60s, and among the most similar to the intended definition of MA, as being capable of `counting copies', together with LZ, which is exactly equivalent to MA or the assembly index. All indexes were applied to spectral data (only 18 extracts available in the original paper in~\cite{marshall_murray_cronin_2017}) and taking their results at face value. Other coding algorithms, including the Huffman coding (R = 0.896), also show a strong positive correlation with MA. The analysis further confirms our previous findings of the similarity in performance between MA and other basic compression measures (that essentially only `count copies' in data, all of them converging to Shannon Entropy) in classifying living vs. non-living mass spectra signatures. All molecules included in the Assembly Theory paper were included. In other words, both theory and experiment converge, and the experiments conform with the findings in this paper that Assembly Theory is equivalent to Shannon Entropy and their assembly index is an LZ compression algorithm renamed.

% Dummy temporary bibliography environment to circumvent latex bug
%\begin{thebibliography}{100}
%\end{thebibliography}

\end{appendices}

\end{document}